\documentclass[11pt]{article}

\usepackage[letterpaper,margin=1in]{geometry}
\usepackage{authblk}
\usepackage{times}
\usepackage{graphicx} 
\usepackage{amsmath,amssymb, amsthm}
\usepackage{hyperref}
\usepackage{cleveref}
\usepackage{xcolor}
\usepackage{dsfont}
\usepackage{mathtools}
\usepackage{tikz}
\usepackage{soul}
\usepackage{booktabs}
\usepackage{subcaption}
\usepackage{siunitx}
\usepackage{centernot}
\usepackage{thm-restate}
\usepackage{natbib}

\usepackage{amsmath,amsfonts,bm}

\def\eqref#1{equation~\ref{#1}}

\def\appref#1{Appendix~\ref{#1}}

\def\1{\bm{1}}

\def\vzero{{\bm{0}}}

\def\vmu{{\bm{\mu}}}

\def\vtheta{{\bm{\theta}}}
\def\vepsilon{{\bm{\varepsilon}}}

\def\vlambda{{\bm{\lambda}}}

\def\vt{{\bm{t}}}
\def\vu{{\bm{u}}}

\def\vx{{\bm{x}}}
\def\vy{{\bm{y}}}
\def\vz{{\bm{z}}}

\def\evtheta{{\theta}}

\def\evt{{t}}

\def\mA{{\bm{A}}}

\def\mI{{\bm{I}}}

\DeclareMathAlphabet{\mathsfit}{\encodingdefault}{\sfdefault}{m}{sl}
\SetMathAlphabet{\mathsfit}{bold}{\encodingdefault}{\sfdefault}{bx}{n}

\def\gC{{\mathcal{C}}}

\def\gF{{\mathcal{F}}}

\def\gI{{\mathcal{I}}}

\def\gL{{\mathcal{L}}}
\def\gM{{\mathcal{M}}}
\def\gN{{\mathcal{N}}}

\def\gP{{\mathcal{P}}}

\def\gS{{\mathcal{S}}}
\def\gT{{\mathcal{T}}}
\def\gU{{\mathcal{U}}}
\def\gV{{\mathcal{V}}}

\def\gX{{\mathcal{X}}}

\newcommand{\R}{\mathbb{R}}

\DeclarePairedDelimiter{\norm}{\lVert}{\rVert}

\newtheorem{definition}{Definition}

\newtheorem{example}{Example}
\newtheorem{lemma}{Lemma}

\title{Axiom Satisfiability of Linear Rewards in Alignment}

\author[1]{Soumya Nasipuri%
    \thanks{Corresponding author. All authors contributed equally to this work.}}
\author[2]{Sayak Ray Chowdhury}
\author[1]{Sanjukta Roy}

\affil[1]{{Indian Statistical Institute, India}
\texttt{\{soumyanasipuri\_r,sanjukta\}@isical.ac.in}}

\affil[2]{Indian Institute of Technology Kanpur, India \\ \texttt{sayakrc@iitk.ac.in}}

\begin{document}

\maketitle

\begin{abstract}
Learning from human preference data is the dominant route to aligning language models with human values. In the linear social choice model, where rewards are linear in a fixed feature representation of prompt-response pairs, \citet{geAxiomsAIAlignment2024} show that fitting such a reward by minimizing any non-decreasing convex loss, including Bradley-Terry-Luce (BTL), fails Pareto Optimality (PO) and Pairwise Majority Consistency (PMC). Moreover, no rule that reads only the majority relation can satisfy PO once the output is required to be linearly induced. We ask what it costs to enforce these axioms anyway. To this end, we relax the linear model to allow per-candidate additive slack. We compute the relaxed linear reward with the smallest total slack that satisfies the axioms with a margin $\eta$, the minimum required difference between two reward values. Our solution satisfies the axioms under no assumptions about the voters or how comparisons were collected. We bound the optimal total slack by $O(1)$ when $\eta$ is at most $O(\frac{1}{m^2})$ for $m$ candidates. Furthermore, we exhibit an instance that forces this bound, concluding that the rate is tight up to constants. In practice, the number of candidates far exceeds the feature dimension, and only a linear reward can be evaluated on unseen responses. We therefore introduce a new method that charges the linear part for each comparison it gets wrong. It simultaneously minimizes the total slack and the number of violations, with a parameter $\lambda$ that trades off between them. We show that the total slack is monotone but saturating in $\lambda$: raising it reduces the violations of the deployed linear reward and increases the slack, yet the slack stays below $(\eta + \Delta\sqrt{d})\lfloor m^2/4\rfloor$, where $\Delta$ and $d$ are the diameter and dimension of the features, respectively. Experiments on both synthetic and real-life preference data with per-annotator labels corroborate our theoretical result and show that the linear reward output by our method beats linear BTL.
\end{abstract} 

\section{Introduction}
AI Alignment is the problem of ensuring that language models act in ways consistent with human intentions, preferences, and normative constraints \citep{halpernAIAlignmentSocialSurvey2026}. 
Reward models trained from human feedback are the interface through which
aggregated human preferences enter a language model
\citep{christiano2017,stiennon2020,ouyang2022,bai2022}. Such models are typically fit by maximum likelihood under a Bradley-Terry model
\citep{bradley1952} on pairwise comparisons, which produces a single scalar
reward from the judgments of many annotators.
When those judgments disagree, the reward model is performing an act of
aggregation. How can we identify the best way to aggregate preferences? Social choice theory \citep{arrow2012,brandt2016} has developed a rich axiomatic framework for addressing this question. The axiomatic approach characterizes aggregation rules through precisely defined, normatively meaningful properties. Thus, it is natural to ask which axioms a reward model satisfies. A recent line of
work argues that this question should be central to alignment
\citep{conitzer2024,dai2024}. 

Given a prompt, an annotator ranks a set of responses. We view each annotator as a voter and each prompt–response pair as a candidate.  Comparisons between every two candidates induce binary preference labels, which are then used to learn a reward model. A loss-based rule assigns rewards by minimizing a measure that penalizes disagreements with voters over candidate pairs.\cite{geAxiomsAIAlignment2024} 
and, subsequently, \citet{hollender2026enforcing} show, surprisingly, an optimally trained linear reward violates a unanimous preference, known as Pareto optimality (PO), i.e, even when everyone selects candidate $a$ over candidate $b$, the learned reward of $b$ is higher. It also violates the preference of a majority, known as pairwise majority consistency (PMC), i.e., even when a majority of voters selects candidate $a$ over candidate $b$, the reward of $b$ is higher. Needless to say, a final ranking representing the human feedback should agree with the unanimous and majority preferences. Thus, in search of an aggregation method, 
\cite{geAxiomsAIAlignment2024} studied the linear reward models. They gave a linear rule based on \cite{Copeland1951}, a well-known rule in social choice theory, that satisfies both of the above constraints. However, a desired ranking need not always be representable by a simple linear function, or even by a polynomial function, of the candidates' feature vectors.

This work asks what it costs to impose the axioms anyway. We relax the linear model of \citet{geAxiomsAIAlignment2024} by a per-candidate \emph{slack}, ask for the smallest slack consistent with the
axioms, and study it as a measure of how far the preference profile lies from what the feature geometry can express.

\subsection{The Linear Social Choice Model~\citep{geAxiomsAIAlignment2024}}
Let $\gC$ be the set of $m$ distinct prompt-response pairs, referred to as
\emph{candidates}, and $\gV$ be the set of $n$ human annotators, the \emph{voters}. Each candidate $c\in \gC$ is associated with a distinct feature vector
$\vx_c \in \mathbb{R}^{d}$. Let $\gX$ denote the set of feature vectors of the candidates $\gC$. We abuse notation and use $\gC$ to denote the set of candidates along with their feature vectors. A parameter vector
$\vtheta \in \mathbb{R}^{d}$ induces a linear reward function 
$r_\vtheta : \gC \mapsto \R$ defined by $r_\vtheta(c)= \langle \vtheta,\vx_c \rangle$ for each candidate $c \in \gC$. We are interested in how any such $r_\vtheta$ ranks the candidates. 
%
For a ranking $\succ_{\vtheta}$ over the candidates, we say that $\vtheta$
\emph{induces} $\succ_{\vtheta}$ if
\begin{align}\label{eq:relation}
  a \succ_{\vtheta} b
\quad \Longleftrightarrow \quad
r_\vtheta(a)\ge r_\vtheta(b).  
\end{align}
We say $\pi = (\succ_1, \succ_2, \ldots, \succ_n)$ a \emph{preference profile}, where ranking $\succ_i$ of $\gC$ is given by voter $i \in \gV$. A \emph{linear aggregation rule} takes as input a profile $\pi$ and outputs a parameter vector $\vtheta^*$. The goal is to design a rule that finds a $\vtheta^*$ whose induced ranking respects some desirable properties (the \emph{axioms}) with respect to $\pi$. In alignment, the model size (e.g., the feature dimension $d$) is much smaller than the prompt/response space (e.g., the number of candidates $m$). In this case ($d < m$), some ranking $\succ$ might not be realized by any $\vtheta$, and hence, the linear aggregation rule might not output a $\vtheta^*$ satisfying the axioms. The example below shows this.
%

\begin{example}\label{ex:challenge}
Consider three candidates $a,b,c$ in $\mathbb{R}$ with $x_a=0$, $x_b=1$, and $x_c=2$. All three lie on a line, so $r_\theta$ is monotone along the line. If $\theta > 0$, then $\theta$ induces the ordering $c \succ b \succ a$, and if $\theta < 0$, then it induces the ordering $a \succ b \succ c$. Out of the $3!$ rankings of $\{a, b, c\}$, only two are
realizable by a linear reward. The remaining four cannot be induced by any $\theta$.

\end{example}
A linear reward orders the
candidates by projecting onto a single direction. So the rankings available to it are constrained by the features $\{\vx_c\}_{c \in \gC}$, and a profile aggregated from voters whose disagreements carry no obligation to respect that constraint. Two improvements follow immediately: enrich the feature representation until the desired ranking becomes realizable, or relax the reward model. We take the second approach by allowing small perturbations to the linear rewards.  

\subsection{Our Contributions}\label{sec:relaxation} 
We relax the linear reward by a per-candidate slack $\varepsilon_c$, yielding
\begin{equation}\label{eq:relaxed-reward}
 r_{\vtheta,\vepsilon}(c) :=\langle \vtheta,\vx_{c} \rangle +\varepsilon_c , \qquad \textit{ for } c \in \gC .
\end{equation}
In the linear model, the slack vector $\vepsilon = \bm{0}$; allowing $\vepsilon \ne \bm{0}$
admits a deviation from linearity, and $\|\vepsilon\|_1$ measures the total absolute slack. By allowing slack in linear rewards, we ensure that all rankings of $\gC$ are realized. e.g., in \Cref{ex:challenge}, the ranking $b \succ c \succ a$ 
isn't realized by linear rewards. But, with the slacks, we can realize it with, say, $\theta=1$ and $\vepsilon = (0, 1.001, 0)$.

We want each slack to be bounded and as small as possible, since otherwise, one can realize any ranking of $\gC$ by taking $\vtheta = \bm{0}$ and only using $\vepsilon$. Our goal is to measure \emph{how much} slack a given profile forces to satisfy the axioms, i.e., how far the axioms push the reward away from the linear class. Formally, we ask for the slack vector $\vepsilon \in \mathbb{R}^m$ of
least $\ell_1$-norm under which Pareto Optimality (PO) and Pairwise Majority Consistency (PMC) hold. We make the following contributions:

(i) We formulate minimizing total slack as a linear program (LP), explicitly enforcing PO and PMC as margin constraints. Condorcet cycles are handled by removing a feedback arc set that is disjoint from the unanimous pairs, thereby preserving as many majority constraints as the cycles allow. Apart from PO and PMC, the reward $r_{\vtheta,\vepsilon}$ returned by the program satisfies Condorcet consistency, and hence, majority consistency, under no assumptions on the profile and feature vectors. 

(ii) In practice, the number of candidates $m$ is much larger than the parameter dimension $d$. So, only the linear model $\vtheta \in \mathbb{R}^d$ is deployable, not the slack $\vepsilon \in \mathbb{R}^m$. However, $\vtheta$ returned from the LP formulated above may exhibit both PO and Majority violations.
We introduce a second program that jointly minimizes the total number of violations by $r_\vtheta$ and the total slack, with a parameter $\lambda$ controlling their relative importance.
The trade-off it governs is monotone: increasing $\lambda$ reduces the violations of $r_\vtheta$ and increases $\norm{\vepsilon}_1$. However, $\norm{\vepsilon}_1$ saturates at $(\eta + \Delta\sqrt{d})\lfloor m^2/4\rfloor$ for $\lambda \ge \Delta\sqrt{d}/\eta$, where $\Delta$ is the $\ell_2$ diameter of the candidate features and $\eta$ is the margin, i.e., the minimum required difference between two reward values. We construct instances, where the total slack $\norm{\vepsilon}_1$ is at least $\tfrac{\eta}{2}\lfloor m^2/4\rfloor$, and show that the upper bound is tight on a class of (infinitely many) instances.




(iii)We characterize when linear rewards can model the preferences. We show that a positive slack does not always imply that the axioms are not realizable with a linear reward $r_\vtheta$: for $\eta$ not larger than the highest attainable margin by $r_\vtheta$, a zero-slack certifies realizability. Otherwise, the certificate is the non-existence of a zero net feature-weighted flow on the majority graph.
 

(iv) Finally, we compare PO and Majority violations of our deployed linear reward against those of a linear model obtained by minimizing the Bradley-Terry loss on both synthetically generated instances and real-life \emph{Machine Deliberation} dataset. We observe a significant reduction in total axiom violations compared to linear BTL, e.g., on the Deliberation dataset, we see a $16.12 \%$ reduction.



\subsection{Related Work}\label{sec:related} \citet{geAxiomsAIAlignment2024} introduces the linear social choice model and shows that every loss-based rule with a nondecreasing weakly convex or strictly convex loss (e.g., BTL) fails both PO and PMC\footnote{The proof contains an incorrect value of $\vtheta$, which can be corrected to establish the statement.}, and that the failure of PO extends to every $C1$ rule. They answer this with Leximax Copeland subject to PO (LCPO), a rule that fills the ranking positions in Copeland order, admitting a candidate when a linear reward can place it there consistently with
Pareto dominance, and prove it satisfies PO and PMC. \cite{hollender2026enforcing} shows that enlarging the reward class does not help, since loss-based rules over bounded-degree polynomial rewards still fail PO and PMC. PO is satisfied under balanced data coverage where comparison directions are uniformly distributed over a sphere, though PMC still fails. Both works assume that the axioms can be met: LCPO requires a PO-respecting feasible ranking to exist, which
\citet{geAxiomsAIAlignment2024} secures by assuming every voter's ranking is itself induced by a parameter vector, and \citet{hollender2026enforcing} requires control over how comparisons are sampled. Instead, we take the features and comparisons as given and ask how far the profile lies from what the geometry can express: the slack is zero exactly when a linear reward satisfies the axioms unaided, and, otherwise, it measures the shortfall.
%
%
\citet{xiao2025theoretical} weight the BTL loss by majority indicators, implement the Copeland rule, thereby satisfying PO and PMC. Their reward class, however, is unrestricted.
\citet{dong2026representation} study this trade-off between representation and rationalizability of axioms.
\citet{procacciaCloneRobustAIAlignment2025} shows that weighted MLE is robust to cloning candidates, preserving other desirable properties of MLE, while \citet{siththaranjanDistributionalPreferenceLearning2024} showed that MLE using regularized BTL loss outputs Borda \citep{borda1784memoire} ranking. 
\cite{halpernAIAlignmentSocialSurvey2026} provides a comprehensive survey of recent related works.

\section{Social Choice Axioms}\label{sec:prelim}
In social choice theory, an axiom constrains an aggregation rule, which maps a profile of individual rankings to a collective ranking. In this work, we examine several axioms from social choice theory to evaluate the reasonableness of our aggregation rule, which is done using a reward function. The rule takes the profile $\pi$ and returns a reward $r$, and the collective ranking is the one induced by ranking candidates in decreasing reward. We therefore state each axiom as a condition on $r$. 


As in \citet{geAxiomsAIAlignment2024}, we primarily consider $C1$ rules, which only need to know, for each pair, which candidate the majority prefers. These axioms constrain $r$ only through pairwise comparisons on which the voters agree to some degree, so we record two sets of pairs:
the \emph{unanimous} pairs $\gU$, on which every voter ranks $a$ above $b$, and the \emph{strict majority} pairs $\gM$, on which more than half do.
To this end, let $\pi = (\sigma_1, \sigma_2, \ldots, \sigma_n)$ be a preference profile, where $\sigma_i$ is a ranking of $\gC$ given by voter $i \in \gV$.
For $a, b \in \gC$, write $w_{a \succ b}(\pi) = \tfrac{1}{n}\big|\{i \in \gV : a \succ_{i} b \}\big|$ for the fraction of voters
ranking $a$ above $b$, so that $w_{a\succ b} + w_{b \succ a} = 1$.
Define
\begin{align}
\gU &:= \{(a,b) \in \gC \times \gC : w_{a \succ b}(\pi) = 1\} && \text{(unanimous pairs)}, \label{eq:U}\\
\gM &:= \{(a,b) \in \gC \times \gC : w_{a \succ b}(\pi) > \tfrac12\} && \text{(strict majority pairs)}. \label{eq:M}
\end{align}
The set $\mathcal{M}$ defines a (weighted) directed graph $G_{\mathcal{M}}$, called the weighted majority graph, with vertices $\gV$, edges $\gM$ with weights $w_{a \succ b}(\pi)$. For an odd number of voters, the majority graph is complete, i.e., for all $a, b \in \gC$, either $(a,b) \in \gM$ or $(b,a) \in \gM$ (called a tournament~\citep{laslierTournamentSolutionsMajority1997}). 
\begin{definition}[PO and PMC]\label{def:axioms}
On a preference profile $\pi$, a reward function $r: \gC \to \mathbb{R}$ satisfies (i) Pareto Optimality (PO) if $r(a) > r(b)$ for every $(a,b) \in \gU$, and (ii) Pairwise Majority Consistency (PMC) if the graph $G_{\gM}$ has no cycle and $r(a) > r(b)$ for every $(a,b) \in \gM$.
\end{definition}
Observe that if there is a cycle, no ranking can satisfy PMC for all pairs, and the axiom is vacuous. Moreover, a PMC ranking for a profile need not exist, but when it does, it is unique.
A cycle in the graph $G_{\gM}$ is called a \emph{Condorcet cycle}.
A weaker version of PMC is \emph{Condorcet consistency}: if a candidate $c^*$ is preferred by a majority of voters over every other candidate, then $c^*$ must be ranked first. Equivalently, if $w_{c^* \succ c} > \frac{1}{2}$ for each $c \in \gC \setminus \{c^*\}$, then  $r(c^*) > r(c)$. The candidate $c^*$ is called the Condorcet winner. A further relaxed version is majority consistency. A rank aggregation rule satisfies \emph{majority consistency (MC)} if, when a candidate $a$ is ranked first by a majority of voters in the input profile, $a$ is ranked first in the output ranking. 


Neither PO nor PMC is satisfied with the rules ordinarily used. Every loss-based rule with a convex loss fails, including BTL. Moreover, no $C1$ rule can satisfy PO once the output is required to be induced by a linear reward. 
Leximax Copeland subject to PO (LCPO) works when the preference profile is linearly realizable, and  LCPO satisfies both PO and PMC~\citep{geAxiomsAIAlignment2024}.
A profile $\pi$ is \emph{linearly realizable} if there exists a $\vtheta \in \mathbb{R}^d$ s.t $r_\vtheta(a)> r_{\vtheta}(b)$ for all $(a,b) \in \gM$ when the PMC ranking exist, otherwise the condition holds for each $(a,b) \in \gU$.
A linear reward may not satisfy these axioms. As Example~\ref{ex:challenge} shows, the rankings a linear function can induce are limited by the geometry of the feature vectors, and the axioms may demand one that lies outside that set -- for any candidate set with $m > d+1$, most rankings are unavailable. We therefore relax the model with per-candidate slack~\eqref{eq:relaxed-reward} and ask for the smallest slack under which the axioms hold. 

\section{A Solution via Linear Programming}

The axioms of Section~\ref{sec:prelim} translate directly into margin
constraints. For a margin $\eta> 0$, PO asks for 
$\vtheta^\top(\vx_a - \vx_b) + (\vepsilon_a - \vepsilon_b) \ge \eta$ on
every unanimous pair, and PMC asks for the same on every strict majority pair.
The latter cannot always be imposed: when $\gM$ contains a Condorcet cycle, no
reward orders all of it consistently~\citep{liu2026statistical}, and the natural remedy is to drop the
majority constraints entirely in that case, enforcing them only when $\gM$ is
acyclic. Writing $\mathds{1}_{\text{PMC}}$ for the indicator of acyclicity, this
gives the constraint family $\gU \cup \mathds{1}_{\text{PMC}}\gM$. 

The relation $\succ$ of
\eqref{eq:relation} is defined with a weak inequality, under which
$\vtheta = \bm{0}$ induces every ranking. Thus, we ask for strict separation at a fixed margin $\eta > 0$, which also
keeps the feasible set closed. Moreover, without a bound on $\vtheta$, one could scale
$\vtheta \mapsto \alpha\vtheta$ with $\alpha \to \infty$ to meet every margin
while driving $\vepsilon$ to zero, so we constrain $\vtheta$ in the box $\|\vtheta\|_\infty \le 1$, which fixes the scale and makes $\vepsilon$ a meaningful measure of deviation from linearity.\footnote{\citet{hollender2026enforcing} observe that the Bradley--Terry loss already imposes an implicit budget on $\|\vtheta\|$, as unbounded scaling drives the loss to infinity.}

Minimizing the total slack $\|\vepsilon\|_1$ subject to these constraints is a linear program:
\begin{equation}\label{eq:lp}
\begin{aligned}
(\mathsf{LP1}) &\qquad \min_{\vtheta,
\vepsilon,\vt}\quad  \sum_{a \in \gC} \evt_a \nonumber\\
\text{subject to}\quad & -t_a \le \varepsilon_a \le t_a, \quad t_a \ge 0 && \forall a \in \gC,\nonumber\\
& \langle\vtheta, \vx_a - \vx_b\rangle + (\varepsilon_a - \varepsilon_b) \ge \eta && \forall (a,b) \in \gU \cup \mathds{1}_{\text{PMC}}\gM, \qquad\text{(PO+PMC)} \nonumber\\
& -1 \le \evtheta_j \le 1 && \forall j \in \lbrace 1,\ldots, d\rbrace.  \nonumber 
\end{aligned}
\end{equation}
Only $(\vtheta, \vepsilon)$ is of interest in what follows: the auxiliary variables satisfy $t_a = |\varepsilon_a|$ at any optimum, so they carry no information beyond
$\vepsilon$ and serve only to write $\|\vepsilon\|_1$ linearly.
\begin{restatable}[$\mathsf{LP1}$ satisfies PO and PMC]{lemma}{lemFirstLPAxioms}\label{lemma:primal_criteria}
    Every feasible solution of $\mathsf{LP1}$  induces a reward function $r_{\vtheta,\vepsilon}$ (\eqref{eq:relaxed-reward}) that satisfies Pareto optimality and pairwise majority consistency, but need not satisfy Condorcet consistency.
\end{restatable}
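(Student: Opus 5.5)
The argument has two parts: a direct verification that PO and PMC hold, and a small counterexample for Condorcet consistency.

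\textbf{PO and PMC.} Fix a feasible $(\vtheta,\vepsilon,\vt)$ of $\mathsf{LP1}$ and let $r=r_{\vtheta,\vepsilon}$ be as in \eqref{eq:relaxed-reward}. For any pair $(a,b)$ in $\gU \cup \mathds{1}_{\text{PMC}}\gM$, the (PO+PMC) constraint rearranges to
\[
r(a)-r(b)=\langle\vtheta,\vx_a-\vx_b\rangle+(\varepsilon_a-\varepsilon_b)\ge\eta>0 .
\]
Taking $(a,b)\in\gU$ gives PO as stated in Definition~\ref{def:axioms}. For PMC there are two cases.
\begin{itemize}
\item If $G_{\gM}$ is acyclic, then $\mathds{1}_{\text{PMC}}=1$, so every $(a,b)\in\gM$ is constrained and $r(a)>r(b)$ holds on all of $\gM$.
\item If $G_{\gM}$ has a cycle, Definition~\ref{def:axioms} makes PMC vacuous, so there is nothing to check.
\end{itemize}
Feasibility of $\mathsf{LP1}$ is not needed for this part: the statement only concerns feasible points. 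Still, I would remark that $\vtheta=\bm 0$ together with slacks taken from any topological order of the acyclic graph $\gU\cup\mathds{1}_{\text{PMC}}\gM$, spaced by $\eta$, is feasible. Note that $\gU$ itself is acyclic, since it is the intersection of linear orders.

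\textbf{Failure of Condorcet consistency.} This part needs a counterexample, and it is the only real content. The idea is that when $G_{\gM}$ contains a Condorcet cycle, the majority constraints are dropped entirely. A Condorcet winner $c^*$ can still exist, for instance when the cycle lies among the other candidates.

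I would build a profile with an odd number of voters on four candidates $c^*,a,b,c$ such that:
\begin{itemize}
\item $c^*$ beats each of $a,b,c$ by a majority;
\item $a\to b\to c\to a$ is a majority cycle;
\item $c^*$ is ranked above $a$ by a majority, but not unanimously.
\end{itemize}
I would try three voters with rankings $c^*abc$, $bc^*ca$ and $cac^*b$, and then check the pairwise counts by hand.
\begin{itemize}
\item $c^*$ over $a$: voters 1 and 2.
\item $c^*$ over $b$: voters 1 and 3.
\item $c^*$ over $c$: voters 1 and 2.
\item The cycle: $a$ over $b$ (voters 1 and 3), $b$ over $c$ (voters 1 and 2), $c$ over $a$ (voters 2 and 3).
\end{itemize}
So the majority graph has a cycle, $\mathds{1}_{\text{PMC}}=0$, and only the unanimous pairs are constrained. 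I expect $\gU$ to be nearly empty here; I would verify each pair, and if some unanimous pair survives, I would check that it does not involve $c^*$ being above anything.

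I would then use features $x\in\R$ with $\vtheta=\theta$ and exhibit a feasible point in which $c^*$ is not ranked strictly first. One option is $\varepsilon=\bm 0$ with $\theta=1$ and $x_{a}$ largest; this is feasible as long as any unanimous constraints are respected. That feasible point violates Condorcet consistency.

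The main obstacle is making sure the example's $\gU$ does not force $c^*$ to the top. If the first choice of rankings fails this, I would perturb the rankings, still keeping every pair involving $c^*$ non-unanimous.
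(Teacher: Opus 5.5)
Your proposal is correct and follows essentially the same route as the paper. PO and PMC are read directly off the margin constraints, with PMC vacuous under a cycle. The Condorcet failure comes from a three-voter, four-candidate profile in which a Condorcet winner sits outside a majority $3$-cycle, so $\mathds{1}_{\text{PMC}}=0$, $\gU=\emptyset$, and $\varepsilon=\bm 0$ with a one-dimensional $\theta=\pm 1$ puts another candidate first. Your own pairwise counts already show every pair is decided $2$--$1$, so $\gU=\emptyset$ outright and the hedging about surviving unanimous pairs is unnecessary; just fix distinct features with $x_a$ largest.
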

The proof is given in \appref{appdx:proofs}. Intuitively, when $\mathcal{M}$ has a cycle, the constraint family collapses
to $\mathcal{U}$, and a candidate ranked first by a majority of voters need not be preferred to every other candidate unanimously. A single cycle anywhere in $\mathcal{M}$ therefore costs every majority constraint. To recover them by deleting some pairwise constraints, we need the following definition.
\begin{definition}\label{def:fas}
Let $G = (V, E)$ be a directed graph. A \emph{feedback arc set (FAS)} of $G$
is a set $\mathcal{F} \subseteq E$ such that $(V, E\setminus\mathcal{F})$
acyclic; it is \emph{minimal} if no proper subset of it is a feedback arc set. 
\end{definition}

The (weighted) majority graph $G_{\gM}$ defined by $\mathcal{M}$ is a directed graph on $\gC$. We take $\mathcal{F}$
to be a feedback arc set of $G_{\mathcal{M}}$ such that
$\mathcal{F} \cap \mathcal{U} = \emptyset$, and set
\begin{align}\label{eq:constraint-family}
  \mathcal{P} \;:=\; \mathcal{U} \cup (\mathcal{M} \setminus \mathcal{F}).
\end{align}
The omission of the pairs in $\mathcal{F}$ ensures a valid program, where the cardinality of $\mathcal{F}$ affects how many of the majority constraints survive. One can compute a $5$-approximation of the minimum FAS in time $O(|\gC|^2)$ in (weighted) tournaments by simply ordering the vertices in non-decreasing order of their (weighted) in-degrees \citep{coppersmithOrderingWeightedNumber2010}. With this, we obtain a new linear program $\mathsf{LP2}$ by replacing the (PO+PMC) constraint of $\mathsf{LP1}$ with
\begin{align}\label{eq:ppmc}
\mathsf{(LP2)}:\quad\quad \langle\vtheta,\vx_a-\vx_b\rangle+(\varepsilon_a-\varepsilon_b)
\geq\eta
\qquad\forall(a,b)\in\mathcal{P}.
\end{align}
PO and PMC are the axioms we set out to enforce, and the program enforces them
by construction, each being written as an explicit family of margin constraints.
We show in \Cref{thm:primal_lp} that Condorcet consistency, consequently majority consistency, follows without further constraints, as a consequence of the structure of $\gP$. The proof is in \appref{appdx:proofs}. 
\begin{restatable}[$\mathsf{LP2}$ satisfy all axioms]{theorem}{thmSecondLPAxioms}\label{thm:primal_lp}
    Every feasible solution of  $\mathsf{LP2}$ induces a reward function $r_{\vtheta,\vepsilon}$ (\eqref{eq:relaxed-reward}) that satisfies Pareto optimality, pairwise majority consistency, Condorcet consistency, and majority consistency.
\end{restatable}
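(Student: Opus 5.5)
Every constraint of $\mathsf{LP2}$ has the form $r_{\vtheta,\vepsilon}(a)-r_{\vtheta,\vepsilon}(b)\ge\eta>0$, where $r_{\vtheta,\vepsilon}(a)-r_{\vtheta,\vepsilon}(b)=\langle\vtheta,\vx_a-\vx_b\rangle+(\varepsilon_a-\varepsilon_b)$. So the plan is to show that each axiom's required strict inequalities are implied by some pair in $\gP=\gU\cup(\gM\setminus\gF)$. We treat the four axioms in turn.

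\emph{Pareto optimality.} Since $\gU\subseteq\gP$, every $(a,b)\in\gU$ gives $r(a)-r(b)\ge\eta>0$, which is PO.

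\emph{Pairwise majority consistency.} PMC says something only when $G_{\gM}$ is acyclic, so assume it is. We need every $(a,b)\in\gM$ to lie in $\gP$. The key step is to argue that $\gF$ is empty in this case. When $G_{\gM}$ is acyclic, $\emptyset$ is already a feedback arc set. We therefore read the construction as taking $\gF$ minimal, or as produced by a procedure returning $\emptyset$ on acyclic graphs. A minimal FAS of an acyclic graph is $\emptyset$, because $\emptyset$ is a proper subset of any nonempty FAS and is itself an FAS. Hence $\gP\supseteq\gM$ and PMC follows. I expect this to be the main obstacle. The definition only says ``a feedback arc set disjoint from $\gU$'', so I would make explicit that minimality (or the in-degree-ordering construction, which on an acyclic tournament recovers the topological order and deletes nothing) is what is meant. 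Otherwise a nonminimal $\gF$ could remove majority arcs needlessly.

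\emph{Condorcet consistency.} Let $c^*$ be a Condorcet winner, so $(c^*,c)\in\gM$ for every $c\ne c^*$. We claim that no arc leaving $c^*$ lies in a minimal $\gF$. Every arc incident to $c^*$ points out of $c^*$, because each majority pair involving $c^*$ is oriented away from it. So $c^*$ is a source in $G_{\gM}$, has in-degree zero, and lies on no directed cycle. Take $f=(c^*,c)\in\gF$. Then $\gF\setminus\{f\}$ is still a feedback arc set: any cycle in $(V,E\setminus(\gF\setminus\{f\}))$ would have to use $f$, hence pass through $c^*$, which is impossible. This contradicts minimality. Hence $(c^*,c)\in\gM\setminus\gF\subseteq\gP$ for all $c$, giving $r(c^*)\ge r(c)+\eta>r(c)$. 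For the in-degree-ordering construction the same conclusion holds directly: $c^*$ has the smallest in-degree (zero) and is placed first, so no arc out of it is reversed.

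\emph{Majority consistency.} If a candidate $a$ is ranked first by a strict majority of voters, then for every $c\ne a$ that same majority ranks $a$ above $c$. Thus $w_{a\succ c}>\tfrac12$ and $a$ is a Condorcet winner. Condorcet consistency then places $a$ first. This completes the plan.
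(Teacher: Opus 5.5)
Your proof is correct and follows the same structure as the paper's: PO from the constraints on $\gU$, PMC from $\gF=\emptyset$ when $G_{\gM}$ is acyclic, Condorcet consistency because the winner is a source whose out-arcs all survive in $\gP$, and majority consistency by reduction to Condorcet consistency. For PMC, the paper gets $\gF=\emptyset$ by running the in-degree ordering only when $G_{\gM}$ has a cycle. The remaining differences are minor. The paper writes $\gP=E(G_{\gM})\setminus\gF$ and so needs a separate lemma that the in-degree ordering never reverses a unanimous arc, whereas you read $\gU\subseteq\gP$ directly off the definition. Your minimality argument for the Condorcet winner is the one in the paper's proof sketch, and it correctly flags that an arbitrary non-minimal $\gF$ would not suffice; the appendix proof instead uses the in-degree argument, which you also give.
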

\emph{Proof Sketch.}
When $G_{\gM}$ is acyclic, then $\gF = \emptyset$, so $\gP = \gM \supseteq \gU$ and \eqref{eq:ppmc} enforce PO and PMC outright. When $G_{\gM}$ is cyclic, only the pairs of $\gM \setminus \gF$ are constrained, and the majority constraint holds on those by construction; PO holds as $\gF \cap \gU = \emptyset$, and thus $\gU$ in $\gP$. For Condorcet consistency, suppose a candidate $a$ is ranked before candidate $b$ by a majority of voters, for every candidate $b \in \gC$. Then $a$ is a source vertex in the majority graph defined by $\gM$. Consequently, no arc out of $a$ lies on a directed cycle, and $a$ will be ranked first with the highest reward. Therefore, Condorcet consistency holds and, by definition, satisfies majority consistency as well. 

\paragraph{Comparison with LCPO.} LCPO~\citep{geAxiomsAIAlignment2024} achieves the same axioms by filling the $m$ positions of the output ranking one at a time. For each position, they test candidates in the order of their Copeland score until a feasible one is found. LCPO solves
$O(m^2)$ linear programs in the worst case.  In contrast, $\mathsf{LP2}$ is a single linear program whose constraints are indexed by $\gP$, which is supported on the comparisons actually observed. Computing $\gP$ costs $O(m^2)$ time by ordering candidates on in-degree and deleting the resulting back arcs, after which a single linear program in $d + 2m$ variables and $O(|\gP|)$ constraints is solved -- as opposed to $O(m^2)$ programs for LCPO, each in $d$ variables and carrying all of $\gU$.

Moreover, LCPO needs a parameter vector that separates every unanimous pair, since each of its feasibility tests yields all Pareto-dominances. When one is absent, it cannot fill even the first position and returns nothing. Such a vector is obtained by assuming that each voter's ranking is itself induced by a $\vtheta'$. This assumption is not testable without eliciting complete rankings, and it restricts each voter to $m^{O(d)}$ linearly induced orderings out of $m!$ possible ones. In contrast, $\mathsf{LP2}$ is feasible on every instance, returning a reward that satisfies the axioms together with an optimal slack $\vepsilon^*$. So, we lose nothing in instances where LCPO applies, but we have a measurement available when it does not.

\section{Moving Closer to Linear Rewards}
\label{sec:prac} 
In practice, we may have $m \gg d$ since the number of prompt-response pairs can be large. Thus, it might not be practical to release both $\vtheta$ and $\vepsilon$ for model evaluation. We therefore ask whether we can get a ``good'' ranking of the candidates based on the linear reward using only using the parameter vector $\vtheta$ obtained from $\mathsf{LP2}$?
Let the ranking of candidates induced by the linear reward $r_\vtheta$ be denoted by $\succ_{\vtheta}$. Then, we want to find a $\vtheta$ such that the number of unanimous and strict majority pairs (as defined in \Cref{eq:U,eq:M}) that are inverted in $\succ_{\vtheta}$ is minimized. Each inverted pair is either a PO violation or violates a majority pair that is not PO (majority violation), indicating a violated constraint in the LP. Since we have no assumption on the preference profile being linearly realizable, we may have some violations for every $\vtheta$ (as in~\Cref{ex:challenge}). Formally, we ask:
\begin{quote}
Given a profile $\pi$ and features $\gX$, can we obtain a $\vtheta \in \mathbb{R}^d$ efficiently such that the number of PO and Majority violations is minimized?
\end{quote}
Given a preference profile $\pi$, let $\succ^*$ be the optimal one that minimizes the number of violations with respect to the PO and majority pairs in $\gP$. A standard measure of distance between two rankings is the \emph{Kendall-Tau (KT) distance}\footnote{KT distance is well known in social choice literature, and the voting rule that produces a ranking minimizing the KT distance from $\succ_i$ for each voter $i \in \mathcal{V}$ is called the Kemeny rule \citep{kemeny1959mathematics}.}. Given two permutations $\succ^*$ and $\succ$ of $[m]$, the KT distance $\mathrm{KT}(\succ^*,\succ)$ is defined as the number of pairs $(i,j)$ such that $i\succ^*j$ and $j\succ i$ for $1\leq i<j\leq m$. Since $\gP$ can be a partial ordering, we define \emph{KT distance of $\succ$ from $\gP$} to be the number of pairs $(i,j) \in \gP$ such that $j\succ i$.
 Let $(\vtheta^*,\vepsilon^*)$ denote an optimal solution of $\mathsf{LP2}$. We first show that $\succ_{\vtheta^*}$ does not minimize the KT distance from a ranking that orders the pairs according to $\gP$.


\begin{example}\label{ex:LP2notminKT}
Let $\gC=\{c_1,c_2,c_3,c_4\}$, with $1$-D features 
$x_{c_1}=-3, x_{c_2}=-2, x_{c_3}=-1$, and $ x_{c_4}=60$.
Suppose that there is only one voter in the profile $\pi$ with preference $\succ =
(c_3\succ c_2\succ c_4\succ c_1)$.
Then, the set of unanimous pairs 
$\gU=\{(c_3,c_2),(c_3,c_4),(c_3,c_1),
(c_2,c_4),(c_2,c_1),(c_4,c_1)\}$.
The majority graph constructed from the preference $\succ$ contains no cycles.
So $\mathcal{F}=\emptyset$ and $\gP=\gU$ in $\mathsf{LP2}$.
We run $\mathsf{LP2}$ with $\eta=1$ and obtain $\| \vepsilon^*\|_1=3.032$ and $\theta^*=-0.01612$. 
Now, for any $\vtheta < 0$, the ranking according to $r_{\vtheta}$ is $\succ'=c_1\succ c_2\succ c_3\succ c_4$. Therefore, $\mathsf{KT}(\succ',\succ)=4$ for $\theta^*$.
Whereas, if we had $\vtheta > 0$, then the ranking according to $r_{\vtheta}$ would be $\widehat{\succ}=c_4\succ c_3\succ c_2\succ c_1$ and
$\mathsf{KT}(\widehat{\succ},\succ)=2$. 
So, while $\mathsf{LP2}$ does minimize the $L_1$ norm of $\vepsilon$, it does not
give us the $\vtheta$ which minimizes the KT distance
from the desired profile.
\end{example}



\subsection{Kendall-Tau Minimizing Model}
We want to find a $\vtheta$ that minimizes the number of PO and Majority violations. In order to do this, we minimize the KT distance from $\gP$.
In the objective function, we can first minimize the KT distance, then minimize the $\ell_1$-norm of $\vepsilon$. Towards this, we introduce a variable $u_{ab}$ for each pair $(a,b) \in \gP$. If $\langle\vtheta, \vx_a-\vx_b\rangle < \eta$, then $u_{ab} =1$ and it is zero otherwise. That is, the variables $u_{ab}$ for $(a,b)\in \gP$ count the number of violated constraints when we only consider $\vtheta$ to compute the reward. Recall that there are no inversions when the reward is computed using $(\vtheta,\vepsilon)$. Thus, $\sum_{(a,b)\in \gP} u_{ab}$ is the total number of violations by $r_{\vtheta}$. Equivalently, it is the total number of inversions when we obtain a ranking $\succ_{\vtheta}$ using $r_{\vtheta}$. Thus, this term is the least KT distance between $\succ_{\vtheta}$ from $\gP$. Therefore, we design the following integer LP.
\begin{align}
\mathsf{(ILP)} &\qquad \min_{\vtheta,
\vepsilon,\vt, \vu}\quad  \sum_{a \in \gC} \evt_a + \lambda \sum_{(a,b) \in \gP}\!\! u_{ab}\nonumber\\
\text{subject to}\quad & -t_a \le \varepsilon_a \le t_a, \quad t_a \ge 0 &&\forall a \in \gC,\nonumber\\
& \langle\vtheta, \vx_a - \vx_b\rangle + (\varepsilon_a - \varepsilon_b) \ge \eta && \forall (a,b) \in \gP, \nonumber\\
& u_{ab}L \ge \eta - \langle\vtheta, \vx_a - \vx_b\rangle, \quad u_{ab} \in\{0,1\}
   && \forall (a,b) \in \gP, \label{eq:countViolations}\\
& -1 \le \evtheta_j \le 1 && \forall j \in \lbrace 1,\ldots, d\rbrace \nonumber. 
\end{align}
We set $L$ to be a large constant with $L \geq \eta+\Delta\sqrt{d}$. When $\lambda$ is a large enough, e.g., $O(m^2)$, we realize our desired goal of first minimizing the KT distance and then minimizing the $\ell_1$-norm of $\vepsilon$. Only $\vtheta$ is of our interest in this ILP, and the auxiliary variables $u_{ab}$ for each $(a,b)\in \gP$ indicate the violations and are not useful for constructing the reward function. 
As the $\mathsf{ILP}$ cannot be efficiently solved, we relax the constraints $u_{ab} \in\{0,1\}$ by $u_{ab} \ge 0$ to obtain a linear program $\mathsf{(LP3)}$. In the LP we do not need the large constant $L$, so we remove it from \Cref{eq:countViolations}.  

We begin by proving the correctness of $\mathsf{LP3}$. Using \Cref{thm:primal_lp} and the fact that the feasible solutions of $\mathsf{LP3}$ are also feasible solutions of $\mathsf{LP2}$, we get the next lemma. All the complete proofs from this section are in Appendix~\ref{appdx:section4}.
\begin{restatable}[$\mathsf{LP3}$ satisfy all axioms]{lemma}{lemThirdLPAxioms}\label{thm:primal_lp3}
Every feasible solution of  $\mathsf{LP3}$ induces a reward function $r_{\vtheta,\vepsilon}$ (\eqref{eq:relaxed-reward}) that satisfies all three of Pareto optimality, pairwise majority consistency, Condorcet Consistency, and majority consistency.
\end{restatable}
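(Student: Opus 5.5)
The plan is to reduce the claim to \Cref{thm:primal_lp} by a projection argument. First I will write out $\mathsf{LP3}$ explicitly. It has the variables $(\vtheta,\vepsilon,\vt,\vu)$ and the objective $\sum_a t_a + \lambda\sum_{(a,b)\in\gP} u_{ab}$. Its constraints are $-t_a\le\varepsilon_a\le t_a$ and $t_a\ge 0$; the margin constraints \eqref{eq:ppmc} on $\gP$; the relaxed counting constraints $u_{ab}\ge \eta-\langle\vtheta,\vx_a-\vx_b\rangle$ with $u_{ab}\ge 0$; and the box $\|\vtheta\|_\infty\le 1$. The key observation is that every constraint of $\mathsf{LP2}$ appears verbatim among the constraints of $\mathsf{LP3}$. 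The only additional constraints involve the new variables $\vu$.

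Next, take any feasible $(\vtheta,\vepsilon,\vt,\vu)$ of $\mathsf{LP3}$ and discard $\vu$. Since the $\mathsf{LP2}$ constraints do not mention $\vu$, the projection $(\vtheta,\vepsilon,\vt)$ satisfies all of them. Therefore it is feasible for $\mathsf{LP2}$. The reward $r_{\vtheta,\vepsilon}$ from \eqref{eq:relaxed-reward} depends only on $(\vtheta,\vepsilon)$, so it is the same function in both programs. Applying \Cref{thm:primal_lp} to this feasible point of $\mathsf{LP2}$ then gives Pareto optimality, pairwise majority consistency, Condorcet consistency and majority consistency at once. No new argument about $\gP$, $\gF$ or $\gU$ is needed, because that work was done in \Cref{thm:primal_lp}.

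I do not expect a real obstacle. The one point to check is that relaxing $u_{ab}\in\{0,1\}$ to $u_{ab}\ge 0$ and dropping $L$ do not change or remove any $\mathsf{LP2}$ constraint; they only loosen constraints on $\vu$, which plays no role in the reward. For completeness I will also note that $\mathsf{LP3}$ is feasible whenever $\mathsf{LP2}$ is. Given any feasible point of $\mathsf{LP2}$, set $u_{ab}=\max\{0,\eta-\langle\vtheta,\vx_a-\vx_b\rangle\}$ for every $(a,b)\in\gP$, which produces a feasible point of $\mathsf{LP3}$. This shows the statement is not vacuous, although the lemma as stated does not require it.
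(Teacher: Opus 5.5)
Your proposal is correct and follows the same route as the paper. Every constraint of $\mathsf{LP2}$ appears unchanged in $\mathsf{LP3}$, so dropping $\vu$ from any $\mathsf{LP3}$-feasible point gives an $\mathsf{LP2}$-feasible point with the same reward $r_{\vtheta,\vepsilon}$, and \Cref{thm:primal_lp} then yields all the axioms (your added feasibility remark via $u_{ab}=\max\{0,\eta-\langle\vtheta,\vx_a-\vx_b\rangle\}$ is not in the paper but is correct and harmless).
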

Now, we upper bound the total slack of $\mathsf{LP3}$ and provide a corresponding lower bound.
\begin{restatable}[Bounds on total slack]{theorem}{thmThirdLPSlack}\label{thm:upper-bound-LP3}
Let $\Delta := \max_{(a,b)\in \gP}\|\vx_a - \vx_b\|_2$. For every $\lambda \ge 0$,
the optimal slack $\vepsilon^*_\lambda$ of $\mathsf{LP3}$ satisfies
\[
\|\vepsilon^*_\lambda\|_1 \le
\min\big\{\eta\lfloor\tfrac{m^2}{4}\rfloor + \lambda\,\eta\,|\gP|\, ,\,
 \big(\eta + \Delta\sqrt{d}\big)\lfloor\tfrac{m^2}{4}\rfloor \big\}.
\]
For every $m,d,\eta$, there exist features in $\mathbb{R}^d$ and a profile, for which 
$\|\vepsilon^*_\lambda\|_1 \ge \frac{\eta}{2}\lfloor\tfrac{m^2}{4}\rfloor$. 
\end{restatable}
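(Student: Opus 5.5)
The plan is to prove both upper bounds by exhibiting explicit feasible points of $\mathsf{LP3}$. The lower bound will come from a pairing argument on a total order.

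\emph{Structure of $\gP$ and a rank-based slack.} First I will check that $\gP$ is acyclic. Since $\gU \subseteq \gM$ and $\gF \cap \gU = \emptyset$, we have $\gP = \gU \cup (\gM\setminus\gF) = \gM\setminus\gF$, which is acyclic because $\gF$ is a feedback arc set of $G_{\gM}$. Hence $\gP$ has a topological order $c_{(1)},\dots,c_{(m)}$ in which every $(a,b)\in\gP$ has $a$ strictly before $b$.

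For a step size $s>0$, define the centered rank slack
\[
\varepsilon_{c_{(k)}} := s\big(\tfrac{m+1}{2}-k\big).
\]
For every $(a,b)\in\gP$ this gives $\varepsilon_a-\varepsilon_b \ge s$. Its norm is
\[
\|\vepsilon\|_1 = s\sum_{k=1}^m \big|k-\tfrac{m+1}{2}\big| = s\lfloor m^2/4\rfloor,
\]
which is a routine computation I will verify separately for even and odd $m$.

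\emph{First upper bound.} Take $\vtheta=\bm 0$, $s=\eta$, $u_{ab}=\eta$ and $t_a=|\varepsilon_a|$. This point is feasible for $\mathsf{LP3}$ and has objective $\eta\lfloor m^2/4\rfloor+\lambda\eta|\gP|$. At the optimum all $u_{ab}\ge 0$, so $\|\vepsilon^*_\lambda\|_1$ is at most the optimal objective, which is at most this value.

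\emph{Second upper bound ($\lambda$-independent).} Fix an optimal solution $(\vtheta^*,\vepsilon^*,\vu^*)$. With $\vtheta^*$ and $\vu^*$ held fixed, the $u$-constraints do not involve $\vepsilon$. Therefore $\vepsilon^*$ must minimize $\|\vepsilon\|_1$ subject to
\[
\varepsilon_a-\varepsilon_b\ge g_{ab}:=\eta-\langle\vtheta^*,\vx_a-\vx_b\rangle \quad \text{for all } (a,b)\in\gP;
\]
otherwise replacing $\vepsilon^*$ would lower the objective. By Cauchy--Schwarz and $\|\vtheta^*\|_2\le\sqrt d\,\|\vtheta^*\|_\infty\le\sqrt d$, we get $|\langle\vtheta^*,\vx_a-\vx_b\rangle|\le\Delta\sqrt d$, so $g_{ab}\le\eta+\Delta\sqrt d$. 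The rank slack with $s=\eta+\Delta\sqrt d$ is then feasible for this restricted problem. Optimality of $\vepsilon^*$ gives $\|\vepsilon^*_\lambda\|_1\le(\eta+\Delta\sqrt d)\lfloor m^2/4\rfloor$. Taking the minimum of the two bounds proves the upper bound.

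\emph{Lower bound.} Take a single voter with ranking $c_1\succ\cdots\succ c_m$, so $\gP=\gU$ is the full total order. Place the features on a short segment: $\vx_{c_k}=\tfrac{\eta k}{2m\sqrt d}\,\mathbf e_1$, which are distinct. Then for any $\vtheta$ with $\|\vtheta\|_\infty\le 1$ and any pair $(a,b)$,
\[
|\langle\vtheta,\vx_a-\vx_b\rangle|\le \tfrac{\eta}{2}\cdot\tfrac{|k_a-k_b|}{m}\le \tfrac{\eta}{2}.
\]
So every feasible $\vepsilon$ satisfies $\varepsilon_{c_k}-\varepsilon_{c_{k+1}}\ge\eta/2$ for consecutive candidates. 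I will pair $c_k$ with $c_{m+1-k}$ for $k\le\lfloor m/2\rfloor$, telescope along the chain, and use
\[
|\varepsilon_{c_k}|+|\varepsilon_{c_{m+1-k}}|\ge\varepsilon_{c_k}-\varepsilon_{c_{m+1-k}}\ge(m+1-2k)\tfrac{\eta}{2}.
\]
Summing over $k$ gives $\|\vepsilon\|_1\ge\tfrac{\eta}{2}\sum_{k\le m/2}(m+1-2k)=\tfrac{\eta}{2}\lfloor m^2/4\rfloor$. This holds for every feasible point, hence for $\vepsilon^*_\lambda$ and for every $\lambda$.

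\emph{Main obstacle.} The step I expect to need the most care is the fixed-$\vtheta^*$ reduction in the second bound. I must argue that $\vepsilon$ appears only in the slack objective and in the $\gP$-constraints, and not in the $u$-constraints, so that any better $\vepsilon$ for the same $(\vtheta^*,\vu^*)$ would contradict optimality. The rest is bookkeeping: the acyclicity of $\gP$ and the identity $\sum_k|k-\tfrac{m+1}{2}|=\lfloor m^2/4\rfloor$.
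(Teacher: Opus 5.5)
Your proposal is correct and follows essentially the paper's route. The first bound uses the feasible point $\vtheta=\vzero$, $u_{ab}=\eta$; the second fixes $\vtheta^*$ and uses a centered rank slack of step $\eta+\Delta\sqrt{d}$; the lower bound uses a unanimous profile with tightly clustered features and the $i \leftrightarrow m+1-i$ pairing. The differences are minor: Cauchy--Schwarz in place of H\"older, your explicit telescoping along consecutive constraints, and your collinear features $\vx_{c_k}=\tfrac{\eta k}{2m\sqrt{d}}\mathbf{e}_1$, which also cover $d=1$ with $m\ge 3$, where the paper's $m$ pairwise distinct unit vectors do not exist.
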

The first upper bound $\eta\bigl\lfloor m^2/4\bigr\rfloor + \lambda\eta|\gP|$ is a feature-free ($\vtheta = \vzero$) baseline and serves as a loose bound that always holds. Setting the margin $\eta < \min \lbrace \tfrac{1}{m^2}, \tfrac{1}{\lambda |\gP|} \rbrace$, we can drive $\|\vepsilon^*_\lambda\|_1 $ to $O(1)$. 
Moreover, in all lower bound instances, $\Delta \leq \eta/(2\sqrt{d})$. On these instances, infinitely many of them, $ \|\vepsilon^*_\lambda\|_1 \le \tfrac{3\eta}{2} \lfloor \tfrac{m^2}{4}\rfloor$, which matches the lower bound within a factor of $3$. 

Setting $\lambda=0$ reduces $\mathsf{LP3}$ to $\mathsf{LP2}$, and our bound gives $\|\vepsilon^*\|_1 \le \eta\lfloor m^2/4\rfloor$, where $\vepsilon^*$ is an optimal solution of $\mathsf{LP2}$. Thus, the total slack of $\mathsf{LP2}$ is less than that of $\mathsf{LP3}$ in the worst-case, but this comes at a cost of more PO and Majority violations, considering only the linear part of the reward (see~\Cref{fig:mallows-dispersion_11_50_25}). As we increase $\lambda$ until it reaches $\Delta \sqrt{d}/\eta$, the total number of constraint violations decreases, and (upper bound on) total slack increases linearly with rate $\eta |\gP|$. Beyond this point, total slack remains capped at $ \big(\eta + \Delta\sqrt{d}\big)\lfloor\tfrac{m^2}{4}\rfloor$ (see~\Cref{fig:lambda_variability_fig}).

Next, we show the dependence of the objective function of the $\mathsf{ILP}$ on $\lambda$. Using an upper bound on the total slack for the $\mathsf{ILP}$ (similar to \Cref{thm:upper-bound-LP3}), we can compute the required value of $\lambda$.
\begin{restatable} [Increasing $\lambda$ decreases violated pairs]{lemma}{lemILPlambda}\label{lem:ILPlambda}
    Let $\vtheta$ denote a solution of $\mathsf{ILP}$ for a given $\lambda$ and $\succ_{\vtheta}$ denote the ranking induced by the reward $r_\vtheta$. Then, the KT distance of $\succ_\vtheta$ from $\gP$ is non-increasing in $\lambda$. After a certain cut-off $\lambda^*$, the number of violated pairs remains the same.  
\end{restatable}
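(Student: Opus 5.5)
The plan is a standard parametric exchange argument applied to two optimal solutions of $\mathsf{ILP}$, followed by an integrality argument for the cut-off. First I would check that at an optimum the variables $u_{ab}$ really count violations. Since $\|\vtheta\|_\infty\le 1$, we have $\|\vtheta\|_2\le\sqrt d$, so
\[
|\langle\vtheta,\vx_a-\vx_b\rangle|\le \Delta\sqrt d .
\]
The choice $L\ge\eta+\Delta\sqrt d$ therefore makes $u_{ab}=1$ always feasible. On the other hand, $u_{ab}=0$ is feasible exactly when $\langle\vtheta,\vx_a-\vx_b\rangle\ge\eta$. Because $\lambda\ge 0$ and the $u_{ab}$ are decoupled from all other constraints, we may take $u_{ab}=\mathds{1}[\langle\vtheta,\vx_a-\vx_b\rangle<\eta]$ at an optimum without loss (when $\lambda>0$ this is forced). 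Hence for a solution $(\vtheta,\vepsilon)$ the objective equals $S+\lambda K$. Here $S=\|\vepsilon\|_1$, and $K=K(\vtheta)$ is the number of pairs of $\gP$ not separated by margin $\eta$ under $r_\vtheta$, which is the KT distance of $\succ_\vtheta$ from $\gP$ in the paper's counting convention.

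For monotonicity, fix $0\le\lambda_1<\lambda_2$ and let $(S_1,K_1)$ and $(S_2,K_2)$ be the slack and violation counts of optimal solutions for $\lambda_1$ and $\lambda_2$. The feasible set does not depend on $\lambda$, so each solution is feasible for the other parameter. Optimality then gives
\[
S_1+\lambda_1K_1\le S_2+\lambda_1K_2 \quad\text{and}\quad S_2+\lambda_2K_2\le S_1+\lambda_2K_1 .
\]
Adding the two inequalities yields $(\lambda_2-\lambda_1)(K_2-K_1)\le 0$, so $K_2\le K_1$. As a byproduct, $S_2\ge S_1$, which is consistent with the slack trade-off. This holds for any choice of optimal solutions, so ties cause no trouble.

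For the cut-off, $K$ is an integer in $\{0,\dots,|\gP|\}$, so a non-increasing function of $\lambda$ is eventually constant. To make $\lambda^*$ explicit, let $K_{\min}=\min_{\vtheta}K(\vtheta)$ over the box, attained at some $\hat\vtheta$. I would use the $\mathsf{ILP}$ analogue of \Cref{thm:upper-bound-LP3}: for a fixed $\hat\vtheta$, the constraints $\varepsilon_a-\varepsilon_b\ge \eta-\langle\hat\vtheta,\vx_a-\vx_b\rangle$ have right-hand side at most $\eta+\Delta\sqrt d$. I expect this to admit slack at most $S_{\max}:=(\eta+\Delta\sqrt d)\lfloor m^2/4\rfloor$, via the same construction used there. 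Take $\lambda>\lambda^*:=S_{\max}$. Any solution with $K\ge K_{\min}+1$ has objective at least $\lambda(K_{\min}+1)$, which exceeds $S_{\max}+\lambda K_{\min}$, the objective of the solution built from $\hat\vtheta$. So every optimum has $K=K_{\min}$ for $\lambda>\lambda^*$.

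The main obstacle is this slack bound for an arbitrary fixed $\vtheta$, namely reproducing the $\lfloor m^2/4\rfloor$ counting from \Cref{thm:upper-bound-LP3}. If that argument does not transfer directly, I would fall back on the cruder feasible choice of $\varepsilon$ obtained by listing the acyclic relation $\gP$ in a topological order $c_1,\dots,c_m$ and setting $\varepsilon_{c_i}=(m-i)(\eta+\Delta\sqrt d)$. This is feasible and gives a finite, if weaker, bound of order $m^2(\eta+\Delta\sqrt d)$. That still suffices for the existence of a finite cut-off $\lambda^*$.
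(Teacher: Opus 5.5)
Your proposal is correct and follows the paper's proof step for step. It uses the same exchange argument between optimal solutions for $\lambda_1<\lambda_2$. It obtains the same threshold $\lambda>(\eta+\Delta\sqrt d)\lfloor m^2/4\rfloor$ by comparing any solution with $K\ge K_{\min}+1$ against a $K_{\min}$-solution whose slack is bounded via \Cref{thm:upper-bound-LP3}. The step you flag as the main obstacle is not one: the construction $\varepsilon_{a_i}=(\eta+\Delta\sqrt d)(\tfrac{m+1}{2}-i)$ uses only the uniform bound $\eta-\langle\vtheta,\vx_a-\vx_b\rangle\le\eta+\Delta\sqrt d$, so it transfers verbatim to any fixed $\hat\vtheta$. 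Your remark that $u_{ab}$ is forced to be an exact violation indicator only when $\lambda>0$ is a small improvement in care over the paper's text.
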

Due to \Cref{eq:countViolations} in the $\mathsf{ILP}$, variable $u_{ab}$ acts as an indicator of a violation by the linear reward for each pair $(a,b)$ in $\gP$.
We next show that when $\lambda$ satisfies the cut-off shown in \Cref{lem:ILPlambda}, i.e., $\lambda > \big(\eta + \Delta\sqrt{d}\big)\lfloor\tfrac{m^2}{4}\rfloor$, then the second term in the objective function is minimized.  
\begin{restatable}[$\mathsf{ILP}$ minimizes KT distance]{lemma}{lemminKTdist}\label{lem:minKTdist}
  Let $\vtheta$ be a solution of $\mathsf{(ILP)}$ and  $\succ_{\vtheta}$ denote the ranking induced by the reward $r_\vtheta$. Then, there exists a $\lambda$ such that the KT distance of $\succ_\vtheta$ from $\gP$ is minimized. 
\end{restatable}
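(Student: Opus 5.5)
The proof will be an exchange argument on the $\mathsf{ILP}$ objective. The idea is to choose $\lambda$ larger than the largest possible value of the slack term. Then giving up even one unit of violation count can never be paid for by a smaller slack.

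Write $J(\vtheta,\vepsilon,\vu)=\|\vepsilon\|_1+\lambda\sum_{(a,b)\in\gP}u_{ab}$, and let $K(\vtheta)$ be the KT distance of $\succ_\vtheta$ from $\gP$. I will use the integer count $V(\vtheta)=|\{(a,b)\in\gP:\langle\vtheta,\vx_a-\vx_b\rangle<\eta\}|$, which is exactly what $\sum u_{ab}$ equals at an optimum by \Cref{eq:countViolations}; I will also take $L\ge\eta+\Delta\sqrt d$ as in the paper. The first step is to check that every $\vtheta$ in the box can be completed to a feasible $(\vtheta,\vepsilon,\vu)$ of the $\mathsf{ILP}$. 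For this I set $u_{ab}=\mathds{1}[\langle\vtheta,\vx_a-\vx_b\rangle<\eta]$, which is valid because $|\langle\vtheta,\vx_a-\vx_b\rangle|\le\|\vtheta\|_2\|\vx_a-\vx_b\|_2\le\sqrt d\,\Delta$ gives $\eta-\langle\vtheta,\vx_a-\vx_b\rangle\le L$. A feasible $\vepsilon$ exists because $\gP$ is acyclic, since $\gU\subseteq\gM$ and $\gM\setminus\gF$ is acyclic. So $J_{\min}(\vtheta):=\min_\vepsilon\|\vepsilon\|_1+\lambda V(\vtheta)$ is well defined for each $\vtheta$.

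The second step is a uniform upper bound $B$ on the minimal slack for a fixed $\vtheta$, namely $S(\vtheta):=\min\{\|\vepsilon\|_1:(\vtheta,\vepsilon)\text{ satisfies the margin constraints on }\gP\}$. I would reuse the construction behind the second bound of \Cref{thm:upper-bound-LP3}, which assigns slacks along a topological order of $\gP$ so that each pair needs at most $\eta+\Delta\sqrt d$ of correction. That gives $S(\vtheta)\le B:=(\eta+\Delta\sqrt d)\lfloor m^2/4\rfloor$ for every feasible $\vtheta$, not just for the optimizer. This step is the one I expect to be the main obstacle, because I must confirm that the argument of \Cref{thm:upper-bound-LP3} is pointwise in $\vtheta$. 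If it is not, I would prove the bound directly: place candidates in a topological order of $\gP$, split them into top and bottom halves, and shift the top half up by $\eta+\Delta\sqrt d$ times suitable counts, charging at most $\lfloor m^2/4\rfloor$ cross pairs. Failing that, I would accept a cruder bound such as $B'=m^2(\eta+2\Delta\sqrt d)$, obtained by setting $\varepsilon_c$ equal to a multiple of the topological rank. Since the lemma only asserts that some $\lambda$ works, any finite $B$ suffices.

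Finally, fix $\lambda>B$ and let $(\vtheta,\vepsilon,\vu)$ be optimal for the $\mathsf{ILP}$. Suppose some $\vtheta'$ has $V(\vtheta')<V(\vtheta)$. Because violation counts are integers, $V(\vtheta)\ge V(\vtheta')+1$, so
\[
J_{\min}(\vtheta')\le B+\lambda V(\vtheta')<\lambda+\lambda V(\vtheta')\le\lambda V(\vtheta)\le J(\vtheta,\vepsilon,\vu),
\]
which contradicts optimality. Hence $\vtheta$ minimizes $V$ over the box. To finish, I would relate $V$ to $K$. A pair with $\langle\vtheta,\vx_a-\vx_b\rangle\ge\eta>0$ is ordered correctly, so $K\le V$. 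For the converse I would use rescaling: the inequalities $\langle\vtheta,\vx_a-\vx_b\rangle>0$ can be upgraded to $\ge\eta$ within the box whenever the realized margins are large enough, and I would note this as the interpretation of the violation count defined in the paper. I would then conclude that the KT distance of $\succ_\vtheta$ from $\gP$ is minimized for every $\lambda>B$, consistent with the cut-off in \Cref{lem:ILPlambda}.
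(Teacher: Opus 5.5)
Your core argument is the paper's own. The proof of \Cref{lem:minKTdist} only says that for large enough $\lambda$ the $\mathsf{ILP}$ minimizes $\sum u_{ab}$. Its quantitative content is the exchange inequality with $\lambda>(\eta+\Delta\sqrt d)\lfloor m^2/4\rfloor$ in the proof of \Cref{lem:ILPlambda}, which is exactly your chain through $J_{\min}(\vtheta')$. The step you flag as the main obstacle is not one: the proof of \Cref{thm:upper-bound-LP3} bounds $c(\vtheta)$ for every $\vtheta$ in the box, so $B=(\eta+\Delta\sqrt d)\lfloor m^2/4\rfloor$ is uniform. Your version is in fact slightly more careful than the paper's bound on $f(s_{\min})$, because you re-optimize the slack of the competitor $\vtheta'$.

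The step that fails is the last one. What you prove is that $\vtheta$ minimizes the margin-violation count $V$ over the box. The only relation to the inversion count $K$ is $K\le V$, and the rescaling ``converse'' is blocked by $\|\vtheta\|_\infty\le 1$. This is not a fixable omission, because read literally the statement is false.

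Scale the features of \Cref{ex:LP2notminKT} by $1/30$: $x_{c_1}=-\tfrac{3}{30}$, $x_{c_2}=-\tfrac{2}{30}$, $x_{c_3}=-\tfrac{1}{30}$, $x_{c_4}=2$, with the same single voter and $\eta=1$, so $\gP=\gU$ is all six pairs.
\begin{itemize}
\item The pairs $(c_3,c_2),(c_3,c_1),(c_2,c_1)$ have feature gaps at most $\tfrac{2}{30}$, so they are margin-violated for every $\theta\in[-1,1]$. Hence $V(\theta)\ge 5$ for $\theta\ge 0$, while $V(-\tfrac12)=4$.
\item Write $R_c=\theta x_c+\varepsilon_c$. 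Using $|\varepsilon_{c_1}|+|\varepsilon_{c_3}|\ge\varepsilon_{c_3}-\varepsilon_{c_1}$ and $|\varepsilon_{c_4}|+|\varepsilon_{c_2}|\ge\varepsilon_{c_2}-\varepsilon_{c_4}$, together with the chain constraints $R_{c_3}-R_{c_1}\ge 3$ and $R_{c_2}-R_{c_4}\ge 1$, every feasible slack satisfies $\|\vepsilon\|_1\ge 4+\theta(x_{c_1}-x_{c_3}+x_{c_4}-x_{c_2})=4+2\theta$.
\item At $\theta=-\tfrac12$, the slack $(\varepsilon_{c_1},\varepsilon_{c_2},\varepsilon_{c_3},\varepsilon_{c_4})=(-\tfrac{41}{20},-\tfrac{1}{30},\tfrac{59}{60},0)$ is feasible, with rewards $R=(-2,0,1,-1)$ and $\|\vepsilon\|_1=\tfrac{46}{15}<4$.
\item So for every $\lambda\ge 0$, the objective at $\theta=-\tfrac12$ is at most $\tfrac{46}{15}+4\lambda<4+5\lambda$, which is below the objective at any $\theta\ge 0$. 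Every $\mathsf{ILP}$ optimum therefore has $\theta<0$.
\item Any $\theta<0$ induces $c_1\succ c_2\succ c_3\succ c_4$, with $K=4$, whereas any $\theta>0$ gives $K=2$.
\end{itemize}

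The lemma is therefore true only if ``KT distance'' means $V=\sum u_{ab}$. That is how the paper's proof uses it: the text before the $\mathsf{ILP}$ equates $\sum u_{ab}$ with the number of inversions of $\succ_\vtheta$, but this is only an upper bound. Make that identification your definition and delete the rescaling claim; your argument is then complete.
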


\begin{example}
In continuation of \Cref{ex:LP2notminKT}, we see that for $\mathsf{LP3}$, if we focus on the Kendall-Tau distance minimization in the objective by setting $\lambda=30$, we obtain $\vtheta=0.0158$ 
and $\| \vepsilon \|_1=4.95$. Therefore, the ranking based on $r_\vtheta$ is $\widehat{\sigma}=c_4\succ c_3\succ c_2\succ c_1$  with KT distance $2$ as opposed to $4$ as in $\mathsf{LP2}$. Thus, we get an optimal $\vtheta$ at the cost of increasing the $\ell_1$ norm of $\vepsilon$.
\end{example}

\subsection{Certificate of Linear Realizability}

Given a set of candidates $\gC$ and their feature vectors $\gX=\lbrace\vx_a\rbrace_{a \in \gC}$, recall that a profile $\pi$ on $\gC$ is \emph{linearly realizable} if there exists a $\vtheta \in \mathbb{R}^d$ such that $\langle\vtheta,\vx_a-\vx_b\rangle > 0$ for all $(a,b) \in \gP$. Define the \emph{best attainable margin} by a normalized linear reward
\begin{align}\label{eq:kappa}
 \kappa(\pi, \gX) \;:=\; \max_{\|\vtheta\|_\infty \le 1}\ \min_{(a,b)\in\gP}\
\big\langle\vtheta,\, \vx_a - \vx_b\big\rangle.   
\end{align}
\begin{restatable}[Realizability $\centernot\implies$ zero-slack always]{proposition}{proprealizability}
\label{prop:realizablity}
If $\pi$ is linearly realizable, then $\kappa(\pi,\gX) > 0$ and
$\vepsilon^*_\lambda = \bm{0}$
if and only if $\eta \le \kappa(\pi,\gX)$ for every $\lambda \ge 0$. 
\end{restatable}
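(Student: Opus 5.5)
The plan is to prove the three parts of \Cref{prop:realizablity} in order: first $\kappa(\pi,\gX)>0$, then the direction $\eta\le\kappa \Rightarrow \vepsilon^*_\lambda=\bm{0}$, then the converse. Throughout I work with $\mathsf{LP3}$ for an arbitrary fixed $\lambda\ge 0$; the case $\lambda=0$ is $\mathsf{LP2}$. The argument uses only feasibility and nonnegativity of the objective, so it does not depend on $\lambda$.

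\emph{Step 1: $\kappa>0$.} By linear realizability there is some $\vtheta_0$ with $\langle\vtheta_0,\vx_a-\vx_b\rangle>0$ for every $(a,b)\in\gP$. If $\gP=\emptyset$ we adopt the convention $\kappa=+\infty$ and there is nothing to prove. Otherwise $\vtheta_0\neq\bm{0}$, and $\gamma:=\min_{(a,b)\in\gP}\langle\vtheta_0,\vx_a-\vx_b\rangle>0$ because $\gP$ is finite. Rescaling to $\vtheta_0/\|\vtheta_0\|_\infty$ gives a point of the box with margin $\gamma/\|\vtheta_0\|_\infty>0$, so $\kappa>0$. I will also note that the maximum in \eqref{eq:kappa} is attained. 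The map $\vtheta\mapsto\min_{(a,b)\in\gP}\langle\vtheta,\vx_a-\vx_b\rangle$ is a minimum of finitely many linear functions, hence continuous, and the box $\|\vtheta\|_\infty\le 1$ is compact. So some $\bar\vtheta$ achieves $\kappa$.

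\emph{Step 2: $\eta\le\kappa\Rightarrow\vepsilon^*_\lambda=\bm{0}$.} Consider the point $\vtheta=\bar\vtheta$, $\vepsilon=\vt=\bm{0}$, $\vu=\bm{0}$. For every $(a,b)\in\gP$ we have $\langle\bar\vtheta,\vx_a-\vx_b\rangle\ge\kappa\ge\eta$, so the margin constraints hold with zero slack. Moreover $\eta-\langle\bar\vtheta,\vx_a-\vx_b\rangle\le 0$, so the relaxed violation constraint $u_{ab}\ge\eta-\langle\vtheta,\vx_a-\vx_b\rangle$ holds with $u_{ab}=0$. The box constraint holds by choice of $\bar\vtheta$. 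This point is feasible with objective value $0$. Since $t_a\ge 0$, $u_{ab}\ge 0$ and $\lambda\ge0$, the objective is nonnegative, so the optimal value is $0$. Hence every optimal solution has $\sum_a t_a=0$, and so $\vepsilon^*_\lambda=\bm{0}$.

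I expect this to be the only delicate point: the optimum need not be unique, so ``the optimal slack'' must be read as the slack of any optimal solution. The zero optimal value handles all optimal solutions at once.

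\emph{Step 3: $\vepsilon^*_\lambda=\bm{0}\Rightarrow\eta\le\kappa$.} Take an optimal solution $(\vtheta^*,\bm{0},\vt^*,\vu^*)$. Its margin constraints read $\langle\vtheta^*,\vx_a-\vx_b\rangle\ge\eta$ for all $(a,b)\in\gP$, and $\|\vtheta^*\|_\infty\le 1$. So $\vtheta^*$ is admissible in \eqref{eq:kappa} with inner minimum at least $\eta$, which gives $\kappa\ge\eta$. Combining Steps 2 and 3 yields the equivalence for every $\lambda\ge 0$. In particular, a realizable profile run with margin $\eta>\kappa$ still has positive optimal slack, which is the point of the title ``realizability $\centernot\implies$ zero-slack''.
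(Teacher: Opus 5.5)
Your proof is correct and follows the paper's argument essentially step for step: rescale a realizing $\vtheta$ into the box to get $\kappa>0$, exhibit the zero-objective feasible point $(\bar\vtheta,\bm{0},\bm{0},\bm{0})$ when $\eta\le\kappa$, and read off $\kappa\ge\eta$ from the margin constraints of a zero-slack optimum, which the paper phrases as a contradiction. Your additions fill small gaps the paper leaves implicit: attainment of the maximum in \eqref{eq:kappa} by compactness, the convention for $\gP=\emptyset$, and the observation that a zero optimal value forces zero slack in every optimal solution, which covers non-unique optima.
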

A non-zero optimal slack $\vepsilon^*_\lambda$, therefore, doesn't certify non-realizability. We give a characterization of linear realizability based on the feature vectors. Collect the feature differences over constrained pairs into the rows of $\mA:= \big[(\vx_a - \vx_b)^{\!\top}\big]_{(a,b)\in\gP}$. A profile is linearly realizable precisely when some $\vtheta$ has $\mA\vtheta > \bm{0}$ component-wise. Gordan's theorem~\citep{vandenberghe2004} states that this fails if and only if the rows of $\mA$ admit a
non-trivial nonnegative combination summing to zero, i.e., for some $\vmu \in \mathbb{R}^{|\gP|},\vmu \gneq \bm{0}$,  we have that $\mA^{\!\top}\vmu=\bm{0}$. 

Let $\mu_{ab}$ denotes the component of $\vmu$ for the pair $(a,b)\in \gP$. The weights $\mu_{ab} \ge 0$ have a natural reading as a \emph{flow} in the majority graph with edges $\gP$. 
It is the amount of flow along the edge
$(a,b)$. Regrouping the sum by candidates, we get 
\begin{equation*}
\mA^{\!\top}\vmu=\sum_{(a,b)\in \gP}\mu_{ab}(\vx_a - \vx_b) = \sum_{a\in\gC}\nu_a(\vmu)\,\vx_a,\, \text{ where }\, \nu_a(\vmu) := \sum_{b:(a,b)\in\gP}\mu_{ab} - \sum_{b:(b,a)\in\gP}\mu_{ba}.
\end{equation*}
The \emph{net flow} at candidate $a$ is $\nu_a(\vmu)$. The total flow $\sum_{a\in \gC}\nu_a(\vmu) = 0$ since each pair $(a,b) \in \gP$ contributes $+\mu_{ab}$ to $\nu_a(\vmu)$. A non-zero $\vmu$ with
$\sum_a \nu_a(\vmu)\vx_a = \bm{0}$ is called \emph{feature-balanced}. Therefore, either  $\mA\vtheta > \bm{0}$ component-wise, or $\vmu$ is feature balanced.

\begin{restatable}[Certificate of realizability]{proposition}{propgordan}\label{prop:gordan}
A profile $\pi$ is linearly realizable if and only if no non-zero feature balanced flow exists, i.e., $\nexists\, \vmu \gneq \vzero$, with $\sum_{a\in \gC}\nu_a(\vmu)\vx_a = \vzero$.
\end{restatable}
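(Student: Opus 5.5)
The plan is to reduce the statement to Gordan's theorem, applied to the matrix $\mA = \big[(\vx_a-\vx_b)^{\top}\big]_{(a,b)\in\gP}$, and then rewrite the alternative in the flow language introduced just before the proposition.

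\emph{Step 1: realizability as a strict system.} By definition, $\pi$ is linearly realizable exactly when some $\vtheta\in\R^d$ has $\langle\vtheta,\vx_a-\vx_b\rangle>0$ for all $(a,b)\in\gP$, that is, $\mA\vtheta>\bm{0}$ componentwise. No norm constraint is needed here, since any strictly separating $\vtheta$ can be rescaled.

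\emph{Step 2: apply Gordan's theorem.} Gordan's theorem states that exactly one of the following holds: (i) there is $\vtheta$ with $\mA\vtheta>\bm{0}$, or (ii) there is $\vmu\gneq\vzero$ with $\mA^{\top}\vmu=\vzero$. For completeness we can check the easy direction directly. If both held, then $0=\langle \mA^{\top}\vmu,\vtheta\rangle=\sum_{(a,b)\in\gP}\mu_{ab}\langle\vtheta,\vx_a-\vx_b\rangle>0$. The inequality is strict because every summand is nonnegative and at least one $\mu_{ab}$ is positive, so we get a contradiction.

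The converse, that (i) failing forces (ii), is the substantive half. We take it from Gordan's theorem as cited \citep{vandenberghe2004}. If a self-contained argument is wanted, we would separate the convex cone $\{\mA\vtheta\}$ from the open positive orthant using a hyperplane, whose normal vector supplies $\vmu$.

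\emph{Step 3: translate to flows.} Using the regrouping identity displayed before the proposition,
\[
\mA^{\top}\vmu=\sum_{a\in\gC}\nu_a(\vmu)\vx_a .
\]
Hence condition (ii) says precisely that there is a nonzero $\vmu\ge\vzero$ with $\sum_a\nu_a(\vmu)\vx_a=\vzero$, i.e., a nonzero feature-balanced flow. Combining the three steps, realizability is equivalent to the nonexistence of such a flow.

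The only delicate point is definitional. Feature-balanced must mean a nonzero $\vmu$ on the edges of $\gP$, not merely a nonzero net-flow vector $\nu(\vmu)$. A circulation (for instance, flow around a cycle) has $\nu\equiv 0$ and is trivially feature-balanced. Since $\gP$ is acyclic this case does not arise, but the proof does not depend on that; it goes through as stated once we keep the condition as $\vmu\gneq\vzero$.
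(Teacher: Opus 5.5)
Your proposal is correct and follows the paper's own proof: apply Gordan's theorem to $\mA = \big[(\vx_a-\vx_b)^{\top}\big]_{(a,b)\in\gP}$, then use the regrouping identity $\mA^{\top}\vmu=\sum_{a}\nu_a(\vmu)\vx_a$ to read the second alternative as a nonzero feature-balanced flow. Your direct check of the easy direction and your note that the condition is $\vmu\gneq\vzero$ rather than $\nu(\vmu)\neq\vzero$ are sound additions but leave the argument unchanged.
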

We now record when a feature-balanced flow exists. To this end, we say the feature vectors $\gX$ are in \emph{general position} if every subset $\gT$ of at most $d+1$ of them is affinely independent, that is, there exist coefficients $(\zeta_a)_{a \in \gT}$, not all zero,
with $\sum_{a \in \gT} \zeta_a\, \vx_a = \vzero$ and
$\sum_{a \in \gT} \zeta_a = 0$. 
\begin{restatable}{corollary}{corraffineindep}
\label{prop:affine-indep}
If $\gX \!\subset\! \mathbb{R}^d$ is in general position and $m \!\le\! d+1$, then every profile is linearly realizable. 
\end{restatable}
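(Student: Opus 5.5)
We will derive the corollary directly from the certificate of Proposition~\ref{prop:gordan}. It suffices to show that no non-zero feature-balanced flow $\vmu \gneq \vzero$ exists. Fix any profile $\pi$ and any $\vmu \gneq \vzero$ supported on $\gP$, and consider its net flows $\nu_a(\vmu)$.

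The plan has two steps. First, observe that the coefficients $\zeta_a := \nu_a(\vmu)$, for $a \in \gC$, always satisfy $\sum_a \zeta_a = 0$, because each edge contributes $+\mu_{ab}$ at $a$ and $-\mu_{ab}$ at $b$. So if $\vmu$ were feature-balanced, $\sum_a \zeta_a \vx_a = \vzero$ would be an affine dependence among the $m \le d+1$ points of $\gX$. Here we read general position in its intended sense: every subset of at most $d+1$ points admits no non-trivial affine dependence. The definition as printed states the negation, and we will note the correction. Under this reading, all $\nu_a(\vmu) = 0$, so $\vmu$ is a circulation on the directed graph with edge set $\gP$.

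The second step, which is the real obstacle, is to rule out a non-zero circulation. A non-negative circulation with positive total mass contains a directed cycle in its support. This follows from the standard flow decomposition: starting from any edge with $\mu_{ab} > 0$, zero net flow at $b$ forces an outgoing positive edge, and finiteness then closes a cycle. So it suffices to show that $\gP$ is acyclic.

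When $G_{\gM}$ is acyclic we have $\gP = \gM$, which is acyclic. When $G_{\gM}$ is cyclic, $\gP = \gU \cup (\gM\setminus\gF)$, and we must show this union is acyclic. The set $\gM\setminus\gF$ is acyclic because $\gF$ is a feedback arc set. Moreover $\gU \subseteq \gM$ (if $n\ge1$) and $\gF\cap\gU=\emptyset$, so $\gU \subseteq \gM\setminus\gF$ and $\gP = \gM\setminus\gF$ is acyclic. Hence no non-zero circulation exists, no feature-balanced flow exists, and Proposition~\ref{prop:gordan} yields linear realizability.

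As a sanity check, there is an alternative direct route. With $m\le d+1$ affinely independent points, the map $\vtheta \mapsto (\langle \vtheta, \vx_a\rangle)_a$ attains any prescribed values up to an additive constant. We can therefore realize any linear extension of the acyclic $\gP$, and then rescale into the box $\|\vtheta\|_\infty\le1$, which preserves strict positivity. I would mention this as the intuitive reason the corollary holds.
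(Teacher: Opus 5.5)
Your proof is correct and follows the paper's route: the paper also combines Proposition~\ref{prop:gordan} with Lemma~\ref{lem:d-plus-2}. That lemma uses acyclicity of $\gP$ to show a nonzero flow has nonzero net-flow coefficients, and then uses general position to force at least $d+2>m$ candidates in their support. The differences are minor: you rule out a nonzero circulation by extracting a directed cycle where the paper uses a topological-order potential $\varphi$, you explicitly justify that $\gP=\gM\setminus\gF$ is acyclic (which the paper only asserts), and you correct the inverted wording of the general-position definition.
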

\Cref{fig:mallows-dispersion_3_6_5} shows linear realizability when $m = d + 1$, as there are no violations of $\mathsf{LP2}$ and $\mathsf{LP3}$.



\begin{figure}[t]
    \centering
    \includegraphics[width=\linewidth]{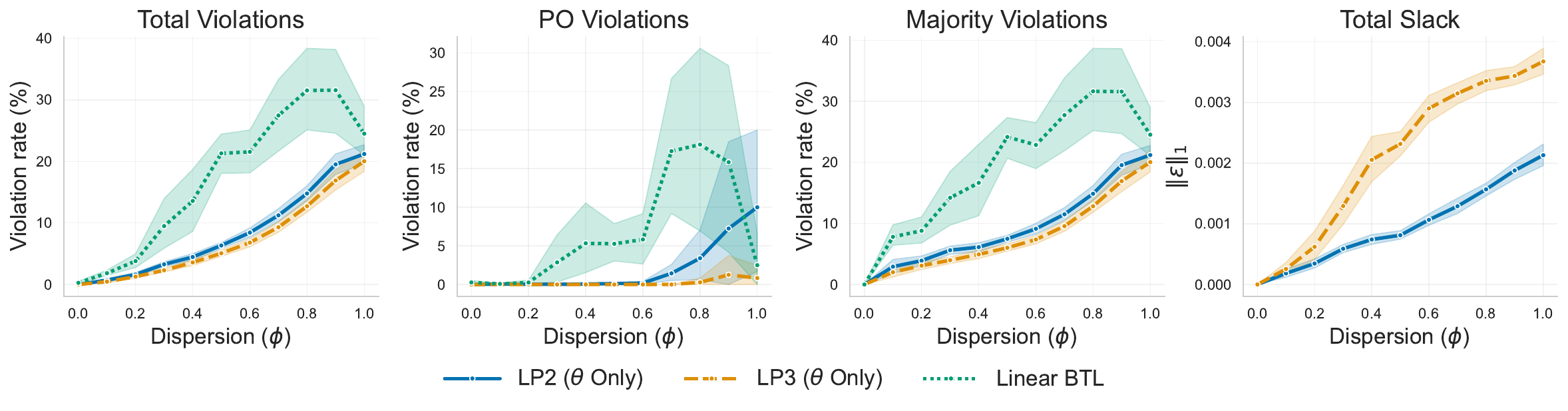}
    \caption{
        Axiom violations and candidate-level slack as a function of normalized Mallows dispersion $\phi$.
        Curves show mean violation rates across trials, with shaded regions denoting 95\% confidence intervals. Panels report (a) Total violations, (b) PO violations, (c) Majority violations,
        and (d) Total $\|\vepsilon\|_1$
        for $\mathsf{LP2}$ and $\mathsf{LP3}$. Experiments use $11$ voters, $50$ candidates, $25$-dimensional features and 20 seeds.
    }
    \label{fig:mallows-dispersion_11_50_25}
\end{figure}
\begin{figure}[t]
    \centering
    \includegraphics[width=\linewidth]{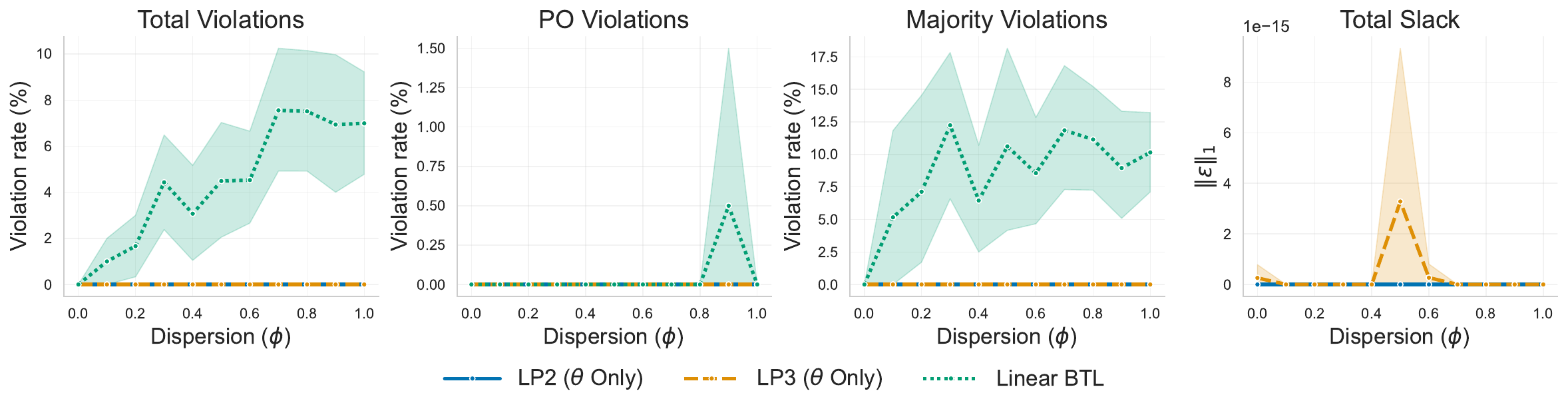}
    \caption{
        Axiom violations and candidate-level slack as a function of normalized Mallows dispersion $\phi$.
        Curves show mean violation rates across trials, with shaded regions denoting 95\% confidence intervals. Panels report (a) Total violations, (b) PO violations, (c) Majority violations,
        and (d) The $\|\vepsilon\|_1$
        for $\mathsf{LP2}$ and $\mathsf{LP3}$.
        Experiments use $3$ voters, $6$ candidates, $5$-dimensional feature embeddings and $20$ seeds.
    }
    \label{fig:mallows-dispersion_3_6_5}
\end{figure}

\section{Experiments} \label{sec:experiments}
We evaluate $\mathsf{LP2}$, $\mathsf{LP3}$, and Linear BTL
\citep{geAxiomsAIAlignment2024} on controlled synthetic and real-world preference datasets. Violations are evaluated using the learned linear reward parameter $\vtheta$, without applying the candidate-specific slack $\vepsilon$, thereby isolating the agreement between the linear representation and the observed preference structure.
We report \emph{PO Violations} and \emph{Majority Violations} as percentages of the corresponding unanimous and majority relations in $\gP$, and \emph{Total Violations} as the percentage of all relations in $\gP$ reversed by $\vtheta$. The source code and experimental implementations are publicly available at \url{https://github.com/Itachi1999/ai_alignment_linear_model_extension}.


\paragraph{Synthetic Data.} We study the effect of increasing preference heterogeneity under controlled feature geometry. Candidate features are sampled from $\gN(\vzero,\mI)$. Preferences are generated according to the normalized Mallows model~\citep{boehmer2021NormalizedMallows} with
$\phi\in\{0,0.1,\ldots,1.0\}$, where larger $\phi$ corresponds to greater dispersion from the reference ranking. The reference ranking is generated using a linear utility of the candidate feature vector and a $\hat{\vtheta}\sim \gN(\vzero,\mI)$. $\mathsf{LP2}$ and $\mathsf{LP3}$ use margin $\eta=10^{-5}$, while $\lambda=10$ for $\mathsf{LP3}$. Figure~\ref{fig:mallows-dispersion_11_50_25} reports the violations (PO, Majority, total), as well as the $\ell_1$ norm of the slack for $\mathsf{LP2}$ and $\mathsf{LP3}$. 


The results show a clear separation between the methods as preference dispersion increases. $\mathsf{LP3}$ consistently exhibits the lowest observed violations (total, PO, majority) across the dispersion range. $\mathsf{LP2}$ incurs the smallest  $\|\vepsilon\|_1$, reflecting its direct minimization of the total slacks. Linear BTL exhibits substantially larger violation rates, with the difference becoming more pronounced as the preference profile becomes more dispersed. These results also illustrate the intended tradeoff introduced by $\mathsf{LP3}$ between candidate-wise slack and pairwise inversions, as slack is more in $\mathsf{LP3}$ than $\mathsf{LP2}$ while all violation rates are less for $\mathsf{LP3}$.

\paragraph{Effect of $\lambda$.} We perform a study on the effects of the parameter $\lambda$ in $\mathsf{LP3}$. Preference profile, features, and reference ranking are generated as before. We keep the dispersion parameter $\phi = 0.5$ and vary $\lambda \in \{0, 20, 40, \dots, 200 \}$. \Cref{fig:lambda_variability_fig} reports the total violations and slack of $\mathsf{LP2}$ and $\mathsf{LP3}$, as well as the margin shortfall $\sum_{(a,b)\in \gP}u_{ab}$ of $\mathsf{LP3}$. We see that the margin shortfall decreases monotonically as $\lambda$ increases. The total violations and the total slack also show desired behaviors as we increase the $\lambda$, but they are not exactly monotonous as we relax $\mathsf{ILP}$ to obtain $\mathsf{LP3}$. 



\begin{figure}[t]
    \centering
    \includegraphics[width=\linewidth]{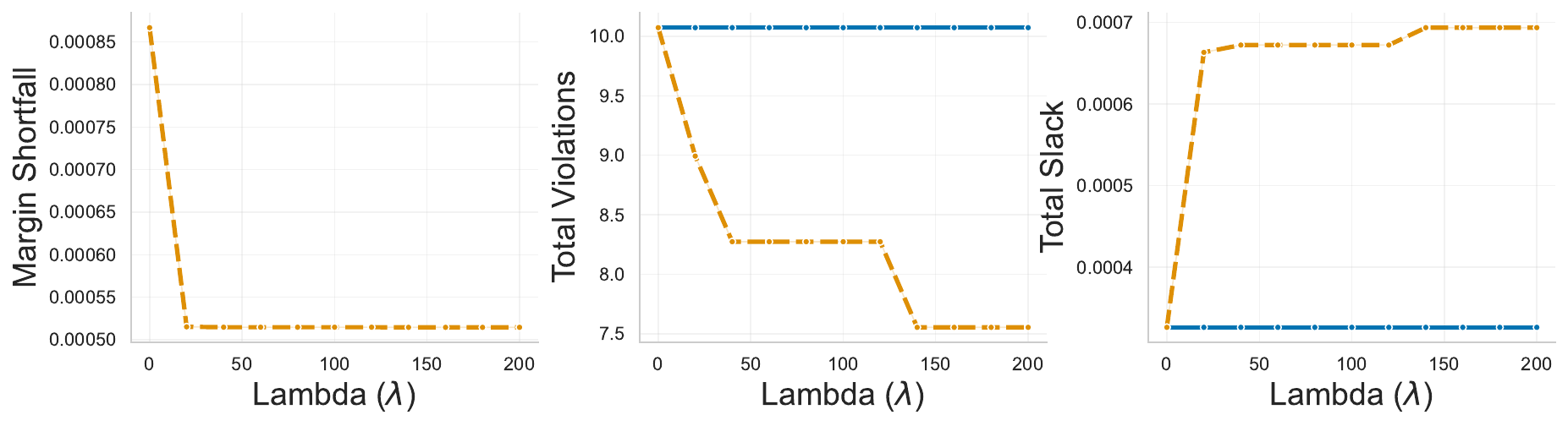}
    \caption{
        Total violations, total slack, and margin shortfall as a function of $\lambda$. The blue line represents $\mathsf{LP2}$, and the orange $\mathsf{LP3}$. Panels report (a) Margin shortfall ($\sum_{(a,b)\in \gP} u_{ab}$), (b) Total violations (\%), (c) The $\|\vepsilon\|_1$. Experiment uses $3$ voters, $25$ candidates, and $10$-dimensional feature.
    }
    \label{fig:lambda_variability_fig}
\end{figure}

\begin{table}[t]
    \caption{Machine Deliberation dataset. $K=500$ questions, $m = 2000, d=768,\eta = 10^{-5}$.}
    \label{tab:real-500-main}
    \centering
    \small
    \begin{tabular}{lccccc}
        \toprule
        Method
        & Total Viol. (\%)
        & PO Viol. (\%)
        & Majority Viol. (\%)
        & $\|\vepsilon\|_1$
        & Time (s) \\
        \midrule
        $\mathsf{LP2}$
        & $1.59 \pm 1.16$
        & $0.65 \pm 0.63$
        & $1.91 \pm 1.37$
        & $\textbf{0.00207} \pm \textbf{0.001}$
        & $\mathbf{308.96}$ \\

        $\mathsf{LP3}$
        & $\mathbf{1.39 \pm 0.98}$
        & $\mathbf{0.44 \pm 0.35}$
        & $\mathbf{1.72 \pm 1.21}$
        & $0.00249 \pm 0.00123$
        & $722.08$ \\

        Linear BTL
        & $17.51 \pm 0.56$
        & $2.24 \pm 0.53$
        & $22.71 \pm 0.62$
        & --
        & $2979.11$ \\
        \bottomrule
    \end{tabular}
\end{table}

\paragraph{Real-World Preference Data.}
We evaluate $\mathsf{LP2}$, $\mathsf{LP3}$, and Linear BTL on \emph{Habermas Machine Deliberation} dataset~\citep{tesslerPreflibDataset2024}. It contains 2,710 separate elections, each consisting of five voters ranking four LLM-generated statements for a deliberation topic. For each trial, we randomly sample $K$ elections and combine their preference DAGs as the disjoint union $P=\bigsqcup_{i=1}^{K}P_i$, yielding a disconnected DAG over $4K$ uniquely identified candidates. Each question-response pair is embedded using \texttt{all-mpnet-base-v2} \citep{song2020mpnet}, yielding
$768$-dimensional representations.
\Cref{tab:real-500-main} reports mean violations (total, PO, and majority), total slack for $\mathsf{LP2}$ and $\mathsf{LP3}$, and optimization time. $\mathsf{LP3}$ achieves the lowest violation in all 3 categories, while $\mathsf{LP2}$ has the lowest optimization time and $\|\vepsilon\|_1$. Linear BTL exhibits substantially higher violation rates and considerably larger optimization time. Additional results (varying $m,d,n$) are in~\appref{appdx:experiments}.

\section{Discussion and future work}

In this paper, we investigated how to satisfy desirable social choice axioms in linear reward models for AI alignment. We proposed an implementable algorithm that learns a linear reward model, inducing a ranking with fewer violations of PO and PMC. Our algorithm accommodates both synthetic and real-world preference data, including partial preferences and Condorcet cycles, and outperforms the linear BTL model. These findings open several directions for future research: Can our approach be extended to implementable non-linear reward models, such as convex models? What structural properties of the underlying LPs can enable better reward modeling and help formalize, analyze, and enhance LLM behavior? What would be the most efficient approximation of LP to implement it efficiently for large $m$?


\newpage
\bibliography{references}
\bibliographystyle{plainnat}

\newpage
\appendix

\section{Deferred Proofs from Section 3}\label{appdx:proofs}

\lemFirstLPAxioms*
\begin{proof}
    Let $\gC=\{c_1, c_2, \dots c_m\}$ be the set of candidates and let $\pi$ be a preference profile over $\gC$. According to the preference profile, we get the sets $\gU$ and $\gM$, which contain the unanimous and majority pairs, respectively. Let $(\hat{\vtheta},\hat{\vepsilon})$ be any feasible solution of $\mathsf{LP1}$.
    
    \paragraph{Pareto Optimality.} Irrespective of the value of $\mathds{1}_{PMC}$, the set of constraints in $\mathsf{LP1}$ for the unanimous pairs $\gU$ is preserved. Therefore, for every Pareto-optimal pair $(c_i, c_j) \in \gU$, we have
    $
    \langle\hat{\vtheta},\vx_{c_i}-\vx_{c_j}\rangle
    +\hat{\varepsilon}_{c_i}-\hat{\varepsilon}_{c_j}
    \ge \eta,$
    which implies $r_{\hat{\vtheta},\hat{\vepsilon}}(c_i)> r_{\hat{\vtheta},\hat{\vepsilon}}(c_j)$. Thus, the induced reward function satisfies PO.

    \paragraph{Pairwise Majority Consistency.} From the definition, if there is a Condorcet cycle, then PMC is vacuously true. Otherwise, we have that $\mathds{1}_{PMC} = 1$. As $\gU \subseteq \gM$, it implies $\gU \cup \mathds{1}_{PMC}\gM \ = \ \gU \cup \gM \ = \ \gM$.  Since $(\hat{\vtheta},\hat{\vepsilon})$ is feasible, the constraint for each $(c_i,c_j)\in \gM$ is satisfied. That is,
    $\langle\hat{\vtheta},\vx_a - \vx_b\rangle
    +\hat{\epsilon}_{c_i}-\hat{\epsilon}_{c_j}
    \ge \eta$.
    Hence $r_{\hat{\vtheta},\hat{\vepsilon}}(c_i)> r_{\hat{\vtheta},\hat{\vepsilon}}(c_j)$
    for every majority edge, establishing PMC. 
        
    \paragraph{Failure of Condorcet Consistency.} For a 1 dimensional feature space, let $\gC = \{c_1, c_2, c_3, c_4\}$, $x_{c_1} = 0$, $x_{c_2} = 1$, $x_{c_3} = 2$, and $x_{c_4} = 1.5$. Let there be 3 voters =\{1,2,3\}, whose preferences are $c_4 \succ_1 c_1 \succ_1 c_2 \succ_1 c_3$, $c_4 \succ_2 c_2 \succ_2 c_3 \succ_2 c_1$ and $c_3 \succ_3 c_1 \succ_3 c_2 \succ_3 c_4$, respectively. Based on the preference profile, $c_4$ is preferred to $c_i$ by a majority of voters for each $i \in \{1,2,3\}$. Hence, $c_4$ is a Condorcet winner. Since there are no unanimous pairs in the preference profile, $\gU = \emptyset$. Also, $c_1$ beats $c_2$, $c_2$ beats $c_3$ and $c_3$ beats $c_1$ in majority; this creates a cycle rendering $\mathds{1}_{PMC} = 0$.

    Therefore, the $\mathsf{LP1}$ for this instance has no constraint for (PO+PMC). Then, the following is a solution for $\mathsf{LP1}$: $\hat{\vepsilon} = \vzero$ and $\hat{\theta} = 1$ or $\hat{\theta} = -1$. In either of the solutions, $c_4$ is not the winner in the ranking induced by the reward function $r_{\vtheta}$. Hence, it fails Condorcet consistency.  
\end{proof}

\thmSecondLPAxioms*

In order to prove the theorem, we first need the following lemma describing the property of the feedback arc set we delete.
\begin{lemma} \label{lemma:coppersmith_pareto}
The feedback arc set $\gF$ obtained using the algorithm in~\cite{coppersmithOrderingWeightedNumber2010} has no Pareto-optimal (unanimous) edge.
\end{lemma}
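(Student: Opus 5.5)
The plan is to use the structure of the Coppersmith--Fleischer--Rudra procedure. It orders the candidates in non-decreasing order of their (weighted) in-degree in $G_{\gM}$ and takes $\gF$ to be the set of \emph{back arcs} of this ordering: the arcs $(a,b)\in\gM$ with $b$ placed strictly before $a$. So it suffices to show that no unanimous arc $(a,b)\in\gU$ is a back arc, i.e.\ that every $(a,b)\in\gU$ has $a$ placed strictly before $b$. Since the ordering is by in-degree, the key claim is that every unanimous pair satisfies $\mathrm{indeg}(a) < \mathrm{indeg}(b)$ strictly, with in-degree measured in whichever weighted version of $G_{\gM}$ the algorithm uses.

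\textbf{Unweighted case.} Fix $(a,b)\in\gU$, so every voter ranks $a$ above $b$. For any third candidate $c$, transitivity gives, voter by voter: if $c \succ_i a$ then $c \succ_i b$. Hence $w_{c\succ a} \le w_{c\succ b}$, and so $(c,a)\in\gM$ implies $(c,b)\in\gM$. The in-neighbours of $a$ among $\gC\setminus\{a,b\}$ are therefore a subset of those of $b$. In addition, $(a,b)\in\gM$ is an in-arc of $b$, while $(b,a)\notin\gM$. Thus $\mathrm{indeg}(b) \ge \mathrm{indeg}(a)+1$.

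\textbf{Weighted case.} Take in-weights $W(v)=\sum_{c\neq v} w_{c\succ v}$. The same argument gives $w_{c\succ a}\le w_{c\succ b}$ for each $c\notin\{a,b\}$. The pair $\{a,b\}$ itself contributes $w_{a\succ b}=1$ to $W(b)$ and $w_{b\succ a}=0$ to $W(a)$. Hence $W(b)\ge W(a)+1$. If the algorithm only sums weights over arcs of $\gM$, the same inequality still holds, because $(c,a)\in\gM$ forces $(c,b)\in\gM$ with no smaller weight.

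\textbf{Conclusion.} Because the inequality is strict, ties cannot be broken against $a$: any non-decreasing in-degree ordering places $a$ before $b$. So $(a,b)$ is a forward arc and does not lie in $\gF$, giving $\gF\cap\gU=\emptyset$.

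The main obstacle is pinning down exactly which in-degree notion and tie-breaking rule the cited algorithm uses, and in particular whether $\gM$ is a tournament. With an even number of voters there can be ties $w=\tfrac12$ that are absent from $\gM$. The plan handles this by relying only on the monotonicity $w_{c\succ a}\le w_{c\succ b}$ together with the strict gain from the pair $\{a,b\}$ itself. That argument is robust to both the weighted and unweighted variants and to missing tie arcs, so strictness of the inequality is the point to check carefully.
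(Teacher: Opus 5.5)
Your proof is correct and follows the same route as the paper: transitivity shows that the unanimous winner $a$ beats every third candidate at least as often as $b$ does, which yields a strict degree gap, so $a$ precedes $b$ in the in-degree ordering and $(a,b)$ is a forward arc. The one difference is that the paper bounds out-degrees and then passes to in-degrees with an ``equivalently'' that holds only when $G_{\gM}$ is a tournament (odd $n$), whereas your direct in-degree comparison, weighted or unweighted, needs no such assumption and so also covers tied pairs when $n$ is even.
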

\begin{proof}
Let $(c_1,c_2)$ be a unanimous pair. Thus every voter prefers $c_1$ to $c_2$, i.e., $n_{c_1\succ c_2}(\pi)=n$. Consider any candidate $x\in C\setminus\{c_1,c_2\}$. If $c_2\succ_i x$ for some voter $i$, transitivity of voter $i$'s preference implies
$c_1\succ_i x$. Hence $n_{c_1\succ x}(\pi)\ge n_{c_2\succ x}(\pi)$,
which means every majority victory of $c_2$ is also a majority victory of $c_1$. Therefore,
$\operatorname{out-deg}(c_1)\ge
\operatorname{out-deg}(c_2)+1$ in $G_{\gM}$,
since $c_1$ defeats $c_2$ and every out neighbor of $c_2$.
Equivalently, $\operatorname{in-deg}(c_1) < \operatorname{in-deg}(c_2)$.
The CopperSmith algorithm orders vertices by non-decreasing in-degree. Hence $c_1$ always appears before $c_2$, making $(c_1,c_2)$ a forward edge. However, the feedback arc set the set of backward edges in the ordering and does not contain the edge $(c_1,c_2)$. Thus, the algorithm never adds a PO edge in $\gF$. If $\gU$ is the set containing the unanimous pairs, $\gF \cap \gU = \emptyset$.
\end{proof}

Now we are ready to prove the theorem.

\begin{proof}[Proof of \Cref{thm:primal_lp}]
Let $\gC=\{c_1, c_2, \dots c_m\}$ be the set of candidates and let $\pi$ be a preference profile over $\gC$. Let $G_\gM$ denote the majority graph corresponding to $\pi$.
If $G_\gM$ contains a cycle, then we first apply the CopperSmith algorithm \citep{coppersmithOrderingWeightedNumber2010}. We order the vertices in the non-increasing order of their in-degree. The feedback arc set $\gF$ is the set of backward arcs in the ordering. If $G_\gM$ is acyclic, then $\gF=\emptyset$. Let the set of pairs $\gP = E(G_\gM) \setminus \gF$ be the remaining edges of $G_{\gM}$ after removing the feedback arc set. There is a constraint for each pair of candidates in $\gP$ in $\mathsf{LP2}$. Let $(\hat{\vtheta},\hat{\vepsilon})$ be any feasible solution of $\mathsf{LP2}$. We prove each property separately.

\paragraph{Pareto Optimality.} If $\gF=\emptyset$, every Pareto-optimal edge already belongs to $\gP$.
If $\gF \neq \emptyset$, we show in \Cref{lemma:coppersmith_pareto}  that the CopperSmith algorithm never removes a Pareto-optimal edge. Hence, every Pareto-optimal pair remains in $\gP$. Therefore, for every Pareto-optimal pair $(c_i,c_j)$ it holds that
$
\langle\hat{\vtheta},\vx_{c_i}-\vx_{c_j}\rangle
+\hat{\varepsilon}_{c_i}-\hat{\varepsilon}_{c_j}
\ge \eta$,
which implies $r_{\hat{\vtheta},\hat{\vepsilon}}(c_i)>
r_{\hat{\vtheta},\hat{\vepsilon}}(c_j)$.
Thus, the induced reward function satisfies PO.

\paragraph{Pairwise Majority Consistency.} From the definition, if there is a Condorcet cycle, then PMC is vacuously true. Otherwise, we have that $G_\gM$ is acyclic, which implies $\gF = \emptyset$, it means that $\gP = \gM$. Since $(\hat{\vtheta},\hat{\vepsilon})$ is feasible, each constraint for $\gP$ is satisfied:
$\langle\hat{\vtheta},\vx_a-\vx_b\rangle
+\hat{\epsilon}_{c_i}-\hat{\epsilon}_{c_j}
\ge \eta$,
for each  $(c_i,c_j)\in \gP.$
Hence, $r_{\hat{\vtheta},\hat{\vepsilon}}(c_i)>
r_{\hat{\vtheta},\hat{\vepsilon}}(c_j)$
for every retained majority edge, establishing PMC.

\paragraph{Condorcet Consistency.}
Suppose a candidate $c^* \in \gC$ is the Condorcet winner in the preference profile $\pi$. Then, for every candidate $c\in C\setminus\{c^*\}$, we have 
$w_{c^*,c}(\pi)>\frac{1}{2}$, that is, $c^*$ defeats every other candidate by majority.
Hence, in the majority graph $G_\gM$, the vertex $c^*$ has an outgoing edge to every other candidate and no incoming edge. Therefore,
$\operatorname{in-deg}(c^*)=0$ and 
$\operatorname{in-deg}(c)\ge 1$
for each $c\neq c^*$.
The Coppersmith algorithm orders vertices by non-decreasing in-degree. Since $c^*$ is the unique vertex with minimum in-degree, it is placed first in the ordering. Consequently, there is no backward arc incident to $c^*$, and none of its outgoing edges are removed while constructing the feedback arc set $\gF$. Thus, for every $c\neq c^*$, the edge $(c^*,c)$ belongs to the constraint set $\gP$. Since $(\hat{\vtheta},\hat{\vepsilon})$ is feasible, we get
$\langle \hat{\vtheta},x_{c^*}-x_c\rangle
+\hat{\varepsilon}_{c^*}-\hat{\varepsilon}_c
\ge \eta$.
Therefore, $r_{\hat{\vtheta},\hat{\vepsilon}}(c^*)
>
r_{\hat{\vtheta},\hat{\vepsilon}}(c)$ for each $ c\neq c^*$.
Hence, $c^*$ is ranked first in the ranking induced by the reward function, proving Condorcet Consistency.

\paragraph{Majority Consistency.} If candidate $c^* \in \gC$ is ranked first by a majority of the voters, it defeats all other candidates by majority, and $c^*$ becomes the Condorcet winner. It is shown that Condorcet consistency is maintained for every feasible solution of $\mathsf{LP2}$. So, $c^*$ is ranked first by $r_{\hat{\vtheta}, \hat{\vepsilon}}$. By the definition of majority consistency, $r_{\hat{\vtheta}, \hat{\vepsilon}}$ also satisfies majority consistency.

Thus, it is proved that any feasible solution of $\mathsf{LP2}$ induces a reward function using \Cref{eq:relaxed-reward} that satisfies PO, PMC, Condorcet Consistency, and Majority Consistency.  
\end{proof}

\section{Deferred proofs from Section 4}\label{appdx:section4}

\thmThirdLPSlack*

\begin{proof}
For each $\vtheta\in\mathbb{R}^d$, define the minimum total slack satisfying margin constraints as
\[
c(\vtheta) := \min\Bigl\{\|\vepsilon\|_1 :\ \varepsilon_a - \varepsilon_b \ge \eta - \langle\vtheta, \vx_a - \vx_b\rangle\ \ \forall (a,b)\in \gP\Bigr\}.
\]
Let $(\vtheta^*_\lambda,\vepsilon^*_\lambda,\vu^*)$ be an optimal solution of $\mathsf{LP3}$. Then $\|\vepsilon^*_\lambda\|_1 = c(\vtheta^*_\lambda)$. To see this, suppose
$\|\vepsilon^*_\lambda\|_1 > c(\vtheta^*_\lambda)$, and let some other slack $\widehat\vepsilon$ attain
$c(\vtheta^*_\lambda)$. Then $(\vtheta^*_\lambda, \widehat\vepsilon, \vu^*)$ is a feasible point of $\mathsf{LP3}$. It leaves $\sum \vu^*_{ab}$ term unchanged, and thus has a strictly smaller objective, a contradiction.

Now we bound $c(\vtheta)$ for any
$\vtheta$ with $\|\vtheta\|_\infty \le 1$. For $(a,b)\in \gP$, H\"older's inequality gives
\begin{equation}\label{eqn:holderbound}
|\langle\vtheta, \vx_a - \vx_b\rangle| \le \|\vtheta\|_\infty \|\vx_a - \vx_b\|_1 \le \|\vx_a - \vx_b\|_1 \le  \Delta\sqrt{d} \implies \eta - \langle\vtheta, \vx_a - \vx_b\rangle \le \eta + \Delta\sqrt{d}.
\end{equation}
The second inequality follows from the fact that $\|\vtheta\|_\infty \le 1$. 
Since $\gP$ is acyclic, without loss of generality, we assume it admits a linear extension
$a_1 \succ \ldots \succ a_m$: every pair has the form $(a_i, a_j)$ with $i<j$. Setting 
$\varepsilon_{a_i} = (\eta+\Delta\sqrt{d})\big(\tfrac{m+1}{2} - i\big)$, we can satisfy every margin
constraint since
\[
\varepsilon_{a_i} - \varepsilon_{a_j} =(\eta+\Delta\sqrt{d})(j-i) \ge \eta+\Delta\sqrt{d}.
\]
Hence, its $\ell_1$-norm upper bounds $c(\vtheta)$, i.e., 
\[
\sup_{\|\vtheta\|_\infty\le 1} c(\vtheta) \le \|\vepsilon\|_1 = (\eta+\Delta \sqrt{d})\sum\nolimits_{i=1}^{m}\Big|\tfrac{m+1}{2}-i\Big|
\le (\eta+\Delta \sqrt{d})\Big\lfloor\tfrac{m^2}{4}\Big\rfloor.
\]
Since $\vtheta^*_\lambda$ lies in the box, it yields
$\|\vepsilon^*_\lambda\|_1 = c(\vtheta^*_\lambda) \le (\eta+\Delta\sqrt{d})\lfloor m^2/4\rfloor$.

Now, set $\vtheta = \vzero$, $\varepsilon_{a_i} = \eta\bigl(\tfrac{m+1}{2} - i\bigr)$ along the same linear extension, and $u_{ab} = \eta$. Then, in \Cref{eq:countViolations}, we get $\eta - \langle\vtheta, \vx_a - \vx_b\rangle \le \eta$. Moreover, every margin constraint reads $\varepsilon_{a_i} - \varepsilon_{a_j} = \eta(j-i) \ge \eta$, and all constraints hold. The objective value of $\mathsf{LP3}$ for this feasible point is $\eta\lfloor m^2/4\rfloor + \lambda\eta|\gP|$. Optimality, together with $\lambda\ge 0$ and $\vu^* \ge 0$, gives
\[
\|\vepsilon^*_\lambda\|_1 
\le \eta\bigl\lfloor m^2/4\bigr\rfloor + \lambda\bigl(\eta|\gP| -\sum\nolimits_{(a,b)\in \gP} u^*_{ab}\bigr) \le \eta\bigl\lfloor m^2/4\bigr\rfloor + \lambda\eta|\gP| .
\]
Combining both bounds proves the upper bound.

For the lower bound, take a unanimous profile in which every voter ranks
$a_1 \succ \cdots \succ a_m$. Then $G_\gM$ is the complete order and
acyclic, so $\gF = \emptyset$ and $\gP = \gU= \gM$. Now, fix $\vx_0 \in \mathbb{R}^d$,
pairwise distinct unit vectors $\vu_1, \dots, \vu_m$, and place the candidate features
at $\vx_{a_i} := \vx_0 + \gamma\,\vu_i$ with $\gamma := \eta / (4\sqrt{d})$.
Since $\|\vu\|_1 \le \sqrt{d}\,\|\vu\|_2 \le \sqrt{d}$, every $\vtheta$ in the box $\|\vtheta\|_\infty \le 1$ satisfies
\[
\big|\langle\vtheta,\, \vx_{a_i} - \vx_{a_j}\rangle\big|
\le \|\vtheta\|_\infty\,\|\vx_{a_i} - \vx_{a_j}\|_1
\le \gamma\big(\|\vu_i\|_1 + \|\vu_j\|_1\big)
\le 2\gamma\sqrt{d} = \tfrac{\eta}{2}.
\]
So, for any feasible slack $\vepsilon$, the margin constraint forces
$\varepsilon_{a_i} - \varepsilon_{a_j} \ge \eta/2$ for each $i < j$. To find the slack with minimum $\ell_1$-norm, pair $i$-th candidate with $(m+1-i)$-th for $i \le \lfloor m/2 \rfloor$ and see that $\|\vepsilon\|_1 = \sum_{i=1}^{\lfloor m/2 \rfloor}
\left(|\varepsilon_{a_i}| + |\varepsilon_{a_{m+1-i}}|\right)$.
The minimum is attained by
$\vepsilon^*_{a_i} = \tfrac{\eta}{2}\big(\tfrac{m+1}{2} - i\big)$, yielding 
\[
\|\vepsilon^*\|_1 \ge \sum\nolimits_{i=1}^{\lfloor m/2 \rfloor}
\left(\varepsilon^*_{a_i} - \varepsilon^*_{a_{m+1-i}}\right)\ge \tfrac{\eta}{2} \sum\nolimits_{i=1}^{\lfloor m/2 \rfloor} (m+1-2i) \ge \tfrac{\eta}{2} \Big\lfloor \tfrac{m^2}{4}\Big\rfloor,
\]
which completes the proof.
\end{proof}

\lemminKTdist*

\begin{proof} Recall that $r_\vtheta(c) = \langle\vtheta,\vx_c\rangle$ for each candidate $c \in \mathcal{C}$. Thus, $r_\vtheta$ violates the constraint for a pair $(a,b) \in \gP$ when $\langle\vtheta, \vx_a-\vx_b\rangle < \eta$. Due to the constraints in \Cref{eq:countViolations}, we have that if $\langle\vtheta, \vx_a-\vx_b\rangle < \eta$, then $u_{ab}$ is non-zero for each $(a,b) \in \gP$. Therefore, if $(a,b)$ is a violated pair, then $u_{ab}=1$  since $u_{ab} \in \{0,1\}$, and it is zero otherwise. That is, the variables $u_{ab}$  count the number of violated constants when we only consider $\vtheta$ to compute the reward function. For a large enough $\lambda$, the ILP minimizes the sum $\sum_{(a,b) \in \gP} u_{ab}$. Therefore, it minimizes the number of violations.
\end{proof}

\lemILPlambda*

\begin{proof}
    For a feasible solution $s = (\vtheta, \vepsilon, \vt, \vu)$ of $\mathsf{ILP}$, let $f(s) = \sum_{a \in \gC}t_a$ and $v(s) = \sum_{(a,b)\in \gP}u_{ab}$. Then we can write the objective function of $\mathsf{ILP}$ as
    $g(s, \lambda) = f(s) + \lambda v(s)$.
    
    Let $\lambda_1 < \lambda_2$. For a given preference profile $\pi$, the parameter $\eta$, feature vectors, and $\gP$, the $\mathsf{ILP}$ gives an optimal solution $s_1$ for $\lambda_1$ and an optimal solution $s_2$ for $\lambda_2$. 
    Since $\mathsf{ILP}$ is a minimization problem and $s_1$ is an optimal solution for $\lambda_1$, then we have the following,
    \begin{align}
        g(s_1, \lambda_1) \ & \leq \ g(s_2, \lambda_1) \nonumber\\
        f(s_1) + \lambda_1 v(s_1) \ & \leq \ f(s_2) + \lambda_1 v(s_2) \nonumber\\
        f(s_1) - f(s_2) \ & \leq \ \lambda_1 \big( v(s_2) - (s_1) \big) \label{eq:lambda1}
    \end{align}
    Similarly, since $s_2$ is an optimal solution for $\lambda_2$, then we have the following,
    \begin{align}
        g(s_2, \lambda_2) \ & \leq \ g(s_1, \lambda_2) \nonumber\\
        f(s_2) + \lambda_2 v(s_2) \ & \leq \ f(s_1) + \lambda_2 v(s_1) \nonumber\\
        f(s_2) - f(s_1) \ & \leq \ \lambda_2 \big( v(s_1) - (s_2) \big) \label{eq:lambda2}
    \end{align}
    Adding \Cref{eq:lambda1} and \Cref{eq:lambda2}, we get the following,
    \begin{align*}
        \lambda_2 \big( v(s_1) - v(s_2) \big) + \lambda_1 \big( v(s_2) - v(s_1) \big) & \geq 0 \\
        (\lambda_2 - \lambda_1) \big( v(s_1) - v(s_2) \big) & \geq 0 
    \end{align*}
    Since, $\lambda_1 < \lambda_2$, we have $v(s_1) \geq v(s_2)$, which means as $\lambda$ increases, the number of violations decreases monotonically. This proves the first part of the theorem.

    Recall that there is a violation for $(a, b)\in \gP$ when $\langle\vtheta, \vx_a-\vx_b\rangle < \eta$. Due to constraint~(\ref{eq:countViolations}), we have that $u_{ab} = 1$ for a violated pair $(a, b)$, and it is zero otherwise. Thus, $v(s)$ represents the number of violated pairs in $\gP$.
    Now, let $p = |\gP|$. Then, $v(s) \in \{0, 1, \ldots, p \}$, can take a finite number of values since $p$ is finite for a finite number of candidates. Let $v_{min}$ be the minimum number of violations possible for a problem instance and $s_{min}$ is a feasible solution attaining $v(s_{min}) = v_{min}$. Therefore, the  value of the objective function at $s_{min}$ for a given $\lambda$ is
    \begin{align*}
        g(s_{min}, \lambda) & = f(s_{min}) + \lambda v(s_{min}) \\
                            & = f(s_{min}) + \lambda v_{min}
    \end{align*}
    In \Cref{thm:upper-bound-LP3}, we give an upper bound in $f$ for the relaxed $\mathsf{LP3}$. Now, we want a bound on $f$ for $\mathsf{ILP}$. Thus, in the proof of \Cref{thm:upper-bound-LP3} we set $u_{ab} = 1$ instead of $u_{ab}=\eta$. Then, we get an upper bound on $f$ and $g$ as follows.
    \begin{align}
        f(s_{min}) \ & \leq \quad \min\{\eta  \lfloor \tfrac{m^2}{4} \rfloor +  \lambda p,  \ (\eta + \Delta\sqrt{d})\lfloor\tfrac{m^2}{4}\rfloor \} \nonumber\\
        f(s_{min}) & \leq^{(\mathrm{i})}  (\eta + \Delta\sqrt{d})\lfloor\tfrac{m^2}{4}\rfloor \nonumber\\
        f(s_{min}) + \lambda v_{min} & \leq \quad (\eta + \Delta\sqrt{d})\lfloor\tfrac{m^2}{4}\rfloor + \lambda v_{min} \nonumber\\
        g(s_{min}, \lambda) & \leq \quad (\eta + \Delta\sqrt{d})\lfloor\tfrac{m^2}{4}\rfloor + \lambda v_{min} \label{eq:lambda3}
    \end{align}
    The inequality $(\mathrm{i})$ follows from the fact that $\min \{a, b\} \leq b$.
    
    Next, we show that if we set $\lambda >  (\eta + \Delta\sqrt{d})\lfloor\tfrac{m^2}{4}\rfloor$, then any feasible solution with number of violations more than $v_{min}$ has a strictly larger value for the objective.
    Let $s'$ be a feasible solution of $\mathsf{ILP}$ such that $v(s') \geq v_{min} + 1$. If we set $\lambda >  (\eta + \Delta\sqrt{d})\lfloor\tfrac{m^2}{4}\rfloor$, we get the following as 
    \begin{align*}
        \lambda \quad & > \quad (\eta + \Delta\sqrt{d})\lfloor\tfrac{m^2}{4}\rfloor \\
        \lambda + \lambda v_{min} & > \quad  (\eta + \Delta\sqrt{d})\lfloor\tfrac{m^2}{4}\rfloor  + \lambda v_{min} \\
        \lambda (v_{min} + 1) & > \quad  (\eta + \Delta\sqrt{d})\lfloor\tfrac{m^2}{4}\rfloor  + \lambda v_{min} \\ 
        f(s') + \lambda v(s') & >^{(\mathrm{ii})}  (\eta + \Delta\sqrt{d})\lfloor\tfrac{m^2}{4}\rfloor  + \lambda v_{min} \\ 
        g(s', \lambda) & > \quad  (\eta + \Delta\sqrt{d})\lfloor\tfrac{m^2}{4}\rfloor  + \lambda v_{min} \\
        g(s', \lambda) & >^{(\mathrm{iii})} g(s_{min}, \lambda)
    \end{align*}

    The inequality $\mathrm{(ii)}$ follows from $f(s') + \lambda v(s') \geq \lambda (v_{min} + 1)$ as $f(s') \geq 0$. The inequality $\mathrm{(iii)}$ follows from \eqref{eq:lambda3}. 

    The value of the objective function for any feasible solution containing at least $v_{min} + 1$ violations is strictly larger than a feasible solution containing $v_{min}$ violations. So, the minimization problem will reach a solution with $v_{min}$ violations, and increasing $\lambda$ will not decrease the number of violations any more since $v$ is a monotonically decreasing function in $\lambda$ and $v_{min}$ is the minimum number of violations possible for the problem instance. Hence, $\lambda^* > (\eta + \Delta\sqrt{d})\lfloor\tfrac{m^2}{4}\rfloor$ is cut-off value.

\end{proof}

\lemThirdLPAxioms*
\begin{proof}
    Let $\gI = (\gX, \gP)$ be a problem instance for $\mathsf{LP3}$ where $\gX$ and $\gP$ are the feature matrix and the preference DAG, respectively. Let $\gS_{\vtheta, \vepsilon, \vt, \vu}(\gI, \mathsf{LP3})$ be the set of feasible solutions of $(\vtheta, \vepsilon, \vt, \vu)$ for instance $\gI$ running $\mathsf{LP3}$.
    
    All the constraints set of $\mathsf{LP2}$ are present in $\mathsf{LP3}$, along with two extra set of constraints present in $\mathsf{LP3}$ to decrease the number of violations, i.e., an non-integral, relaxed version of \eqref{eq:countViolations}. Therefore any feasible solution of $\mathsf{LP3}$ (removing $\vu$) will be a feasible solution of $\mathsf{LP2}$, $\gS_{\vtheta, \vepsilon, \vt}(\gI, \mathsf{LP3}) \subseteq \gS_{\vtheta, \vepsilon, \vt}(\gI, \mathsf{LP2})$.

    Let $(\hat{\vtheta}, \hat{\vepsilon}, \hat{\vt}, \hat{\vu})$ be a feasible solution of $\mathsf{LP3}$ on problem instance $\gI$, hence, $(\hat{\vtheta}, \hat{\vepsilon}, \hat{\vt})$ will be a feasible solution of $\mathsf{LP2}$. From \Cref{thm:primal_lp}, we know that $r_{\hat{\vtheta}, \hat{\vepsilon}}$ will satisfy PO, PMC, Condorcet Consistency, and majority consistency. Since $r_{\hat{\vtheta}, \hat{\vepsilon}}$ is also the induced reward function from a feasible solution of $\mathsf{LP3}$, every feasible solution of $\mathsf{LP3}$ induces a reward function that also satisfies PO, PMC, Condorcet Consistency, and majority consistency.
\end{proof}

\proprealizability*
\begin{proof}
Write $\kappa := \kappa(\pi,\gX)$. Realizability gives a $\vtheta$ satisfying
$\langle\vtheta, \vx_a - \vx_b\rangle > 0$ on every pair of $\gP$. Rescaling it
by $1/\|\vtheta\|_\infty$ puts it in the box, yielding $\kappa  > 0$. Fix $\eta \le \kappa$ and let $\widehat \vtheta$ attain the maximum in \Cref{eq:kappa}. Then
$\langle\widehat \vtheta, \vx_a - \vx_b\rangle \ge \kappa \ge \eta$ for every
$(a,b) \in \gP$, so the pair $(\widehat \vtheta, \bm{0})$ satisfies every margin
constraint, and the shortfalls
$h_{ab}(\widehat \vtheta) = \eta - \langle\widehat\vtheta, \vx_a - \vx_b\rangle \le 0$ allow
$u_{ab} = 0$. This feasible point has the objective value $0$, hence it is optimal and thus
$\vepsilon^*_\lambda = \bm{0}$.
Now, fix $\eta > \kappa$ and suppose $\vepsilon^*_\lambda = \bm{0}$.
Each margin constraint then reads
$\langle\vtheta^*_\lambda, \vx_a - \vx_b\rangle \ge \eta$. So $\vtheta^*_\lambda$ is feasible for \eqref{eq:kappa}
with value at least $\eta$, giving $\kappa \ge \eta$ -- a contradiction. Hence
$\vepsilon^*_\lambda \ne \bm{0}$.
\end{proof}

\propgordan*

\begin{proof}
 Gordan's theorem~\citep{vandenberghe2004} says that for any matrix $\mA$, either
(i) $\exists\, \vz$ with $\mA\vz > \vzero$ componentwise, or,
(ii) $\exists\, \vy \gneq \vzero$, with $\mA^{\!\top}\vy = \vzero$. Take $\mA := \big[(\vx_a - \vx_b)^{\!\top}\big]_{(a,b)\in \gP}$.

(i) Taking $\vz = \vtheta$, the condition $\mA\vtheta > \vzero$ is exactly $\langle\vtheta, \vx_a - \vx_b\rangle > 0$ on every constrained pair and gives realizability. It is
also equivalent to realizability at any prescribed margin: taking
$\vz = \lambda \vtheta$, where
$\lambda = \eta/\min_{(a,b)\in \gP}\langle\vtheta, \vx_a - \vx_b\rangle$, gives $\langle\vtheta, \vx_a - \vx_b\rangle \ge \eta$ 

(ii) Taking $\vy = \vmu$, with $\vmu \gneq \vzero$, gives
$\mA^{\!\top}\vmu = \sum_{(a,b)\in \gP}\mu_{ab}(\vx_a - \vx_b)$. Regrouping the sum by candidates gives $\mA^{\!\top}\vmu = \sum_{a\in \gC}\nu_a(\vmu)\vx_a = \vzero$. 

Exactly one alternative holds, giving the claim.
\end{proof}


\begin{lemma}[Balanced flow need $d+2$ candidates]\label{lem:d-plus-2}
Let $\vmu \gneq \vzero$ satisfy $\sum_{a\in \gC}\nu_a(\vmu)\vx_a = \vzero$. Set
$\gT := \{ a \in \gC : \nu_a(\vmu) \ne 0\}$.
Then $\{\vx_a\}_{a \in \gT}$ is affinely dependent, with coefficients $(\nu_a(\vmu))_{a \in \gT}$. Moreover, if the features are in the general position, then
$|\gT| \ge d+2$.
\end{lemma}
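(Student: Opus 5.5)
Write $\nu_a := \nu_a(\vmu)$. The plan is to derive the two conditions of affine dependence directly from the flow structure, and then use general position to force a large support. First we record the two identities available for $\vmu$. The hypothesis gives $\sum_{a\in\gC}\nu_a\vx_a=\vzero$. The regrouping argument preceding \Cref{prop:gordan} gives $\sum_{a\in\gC}\nu_a=0$, because each pair $(a,b)\in\gP$ contributes $+\mu_{ab}$ to $\nu_a$ and $-\mu_{ab}$ to $\nu_b$. Every term with $\nu_a=0$ drops out of both sums. Restricting to $\gT$ therefore yields
\[
\sum_{a\in\gT}\nu_a\vx_a=\vzero, \qquad \sum_{a\in\gT}\nu_a=0 .
\]
By definition of $\gT$, all coefficients $(\nu_a)_{a\in\gT}$ are nonzero.

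The main obstacle is nontriviality: we need $\gT\neq\emptyset$, i.e.\ a nonzero flow $\vmu\gneq\vzero$ must have some candidate with nonzero net flow. I will use acyclicity of $\gP$. Since $\gP$ is acyclic, fix the linear extension $a_1\succ\cdots\succ a_m$ used in the proof of \Cref{thm:upper-bound-LP3}, so every pair of $\gP$ has the form $(a_i,a_j)$ with $i<j$. Choose the smallest index $i$ such that some $\mu_{a_i b}>0$; such an $i$ exists because $\vmu\gneq\vzero$.

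Now compute $\nu_{a_i}$. Any incoming pair $(b,a_i)\in\gP$ has $b=a_k$ with $k<i$. By minimality of $i$, every flow out of such an $a_k$ is zero, so all incoming terms vanish. Hence $\nu_{a_i}=\sum_b\mu_{a_ib}>0$, and so $a_i\in\gT$. The identity $\sum_{a\in\gT}\nu_a=0$ then forces a second element of $\gT$, with negative net flow, so $|\gT|\ge 2$. Thus $\{\vx_a\}_{a\in\gT}$ is affinely dependent, witnessed by the nontrivial coefficients $(\nu_a)_{a\in\gT}$. This is the standard notion: a nonzero coefficient vector summing to zero that annihilates the points.

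For the general-position part, argue by contradiction. Suppose $|\gT|\le d+1$. General position says every subset of at most $d+1$ feature vectors is affinely independent. Here I read the paper's definition as intending affine independence, since the displayed condition, as written, describes dependence. Affine independence means no nontrivial coefficients summing to zero annihilate the points, which contradicts the dependence just established. Hence $|\gT|\ge d+2$. The only delicate point is the nonemptiness of $\gT$; that is exactly where acyclicity of $\gP$ is needed, since a circulation on a cycle would have all net flows equal to zero.
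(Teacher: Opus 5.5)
Your proof is correct. Its skeleton matches the paper's: both record $\sum_a\nu_a\vx_a=\vzero$ and $\sum_a\nu_a=0$, restrict to $\gT$, use acyclicity of $\gP$ to rule out $\nu_a(\vmu)=0$ for all $a$, and finish with general position. You are also right that the paper's displayed definition of general position describes dependence and must be read as independence.

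You differ from the paper in the nontriviality step. The paper argues globally with a potential. It takes $\varphi_a$ to be the position of $a$ in a reversed topological order, so that $\varphi_a-\varphi_b\ge 1$ on every edge of $\gP$, and computes
\[
\sum_{a\in\gC}\nu_a(\vmu)\,\varphi_a=\sum_{(a,b)\in\gP}\mu_{ab}(\varphi_a-\varphi_b)\ge\sum_{(a,b)\in\gP}\mu_{ab}>0 .
\]
This is the easy direction of Gordan's alternative applied to the incidence matrix of $\gP$, with $\varphi$ as the strictly separating vector. You argue locally and extremally instead: the earliest vertex in the linear extension that carries positive outflow can receive no inflow, so its net flow is strictly positive.

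Your version is more elementary and names a concrete member of $\gT$. Symmetrically, the last vertex receiving positive inflow has strictly negative net flow. The paper's version is a one-line computation that also gives the quantitative bound $\sum_a\nu_a(\vmu)\varphi_a\ge\|\vmu\|_1$, and it fits naturally with the Gordan-based reasoning behind \Cref{prop:gordan}.
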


\begin{proof}
Taking $\zeta_a = \nu_a(\vmu)$ gives
$\sum_{a \in \gC}\zeta_a \vx_a = 0$, and $\sum_{a \in \gC}\zeta_a = 0$. Restricting both sums to $\gT$ changes nothing, as $\lambda_a = 0$ off $\gT$.
It remains to show $\vlambda \not\equiv \vzero$, so that the dependence is nontrivial.

Suppose $\bm{\zeta} \equiv \vzero$. Since $\gP$ is acyclic, let $\varphi_a$ be the position of $a$ in a reversed topological order, so
$\varphi_a - \varphi_b \ge 1$ on every edge $(a,b)\in \gP$. Interchanging sums over candidates and edges, we get
\[
0 = \sum_{a \in \gC}\zeta_a \varphi_a
= \sum_{(a,b) \in \gP}\mu_{ab}(\varphi_a - \varphi_b)
\ge \sum_{(a,b) \in P}\mu_{ab} \ge 0.
\]
Hence $\vmu \equiv \vzero$, a contradiction. So $\bm{\zeta} \not\equiv \vzero$,
$\gT \ne \emptyset$, and $\{\vx_a\}_{a \in \gT}$ is affinely dependent. Under the general
position, this forces $|\gT| \ge d+2$ since every affinely dependent subset $\gT$ has at least $d+2$ elements.
\end{proof}

\corraffineindep*
\begin{proof}[Proof of \Cref{prop:affine-indep}]
By Lemma~\ref{lem:d-plus-2}, a non-zero feature balanced flow would require
$|\gT| \ge d+2 > m$ candidates. Hence, no such flow exists,
and Proposition~\ref{prop:gordan} gives linear realizability.
\end{proof}

\section{Additional Experimental Results} \label{appdx:experiments}

We conduct additional experiments to examine the robustness of our
observations under alternative synthetic and real-world settings. Unless otherwise stated, all experiments use the same evaluation metrics and optimization procedures as described in \Cref{sec:experiments}. 

\paragraph{Real world dataset description.} We perform our experiments on the \emph{Habermas Machine Deliberation} dataset, particularly PrefLib dataset 00070 \citep{tesslerPreflibDataset2024}. The dataset contains 2,710 separate election files. Each mini-jury consists of five participants who provide rankings over four statements generated by an LLM for a specific deliberation topic; the original data were collected by Google DeepMind and converted to PrefLib format.\footnote{\url{https://preflib.github.io/PrefLib-Jekyll/dataset/00070}} The resulting files therefore provide a collection of small, independently defined preference profiles. 

\paragraph{Experimental setup.} Python 3.13 \citep{python3} is used for running the experiments in this paper. The linear programs $\mathsf{LP2}$ and $\mathsf{LP3}$ are solved using the CVXPY \citep{diamond2016cvxpy,agrawal2018rewritingCVXPY} package. Linear BTL model function is trained using a non-linear constrained optimization algorithm, Sequential Least Squares Programming (SLSQP) with a maximum iteration count of $100$ and a tolerance of $10^ {-6}$, on a constraint of $\| \vtheta \|_\infty \leq 1$. SLSQP is solved using the SciPy package \citep{2020SciPy-NMeth}. For the embedding of the real-world data questions and responses, we utilize the \texttt{all-mpnet-base-v2} variant of the MPNet model \citep{song2020mpnet}. The model is accessed online from the \texttt{huggingface}\footnote{\url{https://huggingface.co/sentence-transformers/all-mpnet-base-v2}} library.
In the real-world data, we sample $K$ questions and get $K$ preference DAGs. If we denote $\gP_i$ as the edge set of the $i^{th}$ preference DAG, for the loss calculation of linear BTL, we used the following log-likelihood-based loss function,
$$
    \gL_{\text{Linear-BTL}}(\vtheta) := - \sum_{i = 1}^{K} \sum_{(a, b) \in \gP_{i}} n_{a \succ b} \log(\sigma(\langle \vtheta, \vx_a - \vx_b \rangle)) + n_{b \succ a} \log(\sigma(\langle \vtheta, \vx_b - \vx_a \rangle))
$$
We combine the preference graphs to produce the larger graph, which consists only of those smaller preference graphs as disconnected components. So, this loss function is justified.

\paragraph{Additional synthetic experiments.}
We vary the synthetic problem parameters, including the number of
alternatives, the number of voters, and the feature dimension, while keeping the remaining components of the experimental pipeline fixed. Results are reported in \Cref{fig:mallows-dispersion_3_10_5,fig:mallows-dispersion_3_15_5}.

\paragraph{Additional real-world experiments.} We use the same experimental setup as used in \Cref{sec:experiments}. We perform several experiments for $100$ and $500$ questions with varying values of $d < 768$. PCA is used to reduce the feature dimension to the desired dimension. The findings are reported in \Cref{tab:real_data_results_100,tab:real_data_results_500}.

\begin{figure}[t]
    \centering
    \includegraphics[width=\linewidth]{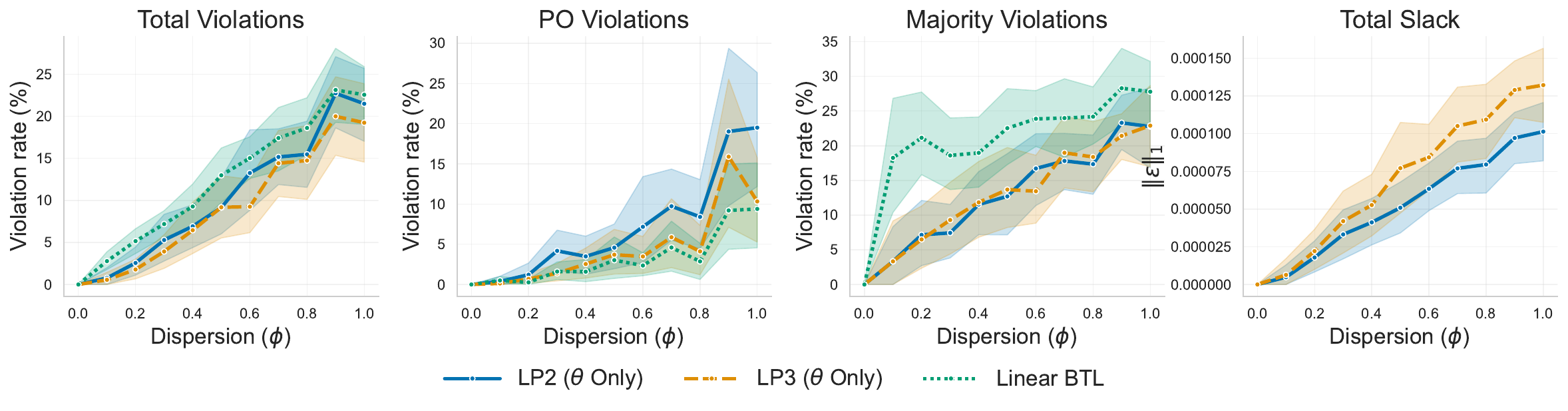}
    \caption{
        Axiom violations and candidate-level slack as a function of normalized Mallows dispersion $\phi$.
        Curves show mean violation rates across trials, with shaded regions denoting 95\% confidence intervals. Panels report (a) Total violations, (b) PO violations, (c) Majority violations,
        and (d) The $\|\vepsilon\|_1$
        for the $\mathsf{LP2}$ and $\mathsf{LP3}$.
        Experiments use $3$ voters, $10$ candidates, and
        $5$-dimensional feature embeddings.}
    \label{fig:mallows-dispersion_3_10_5}
\end{figure}

\begin{figure}[t]
    \centering
    \includegraphics[width=\linewidth]{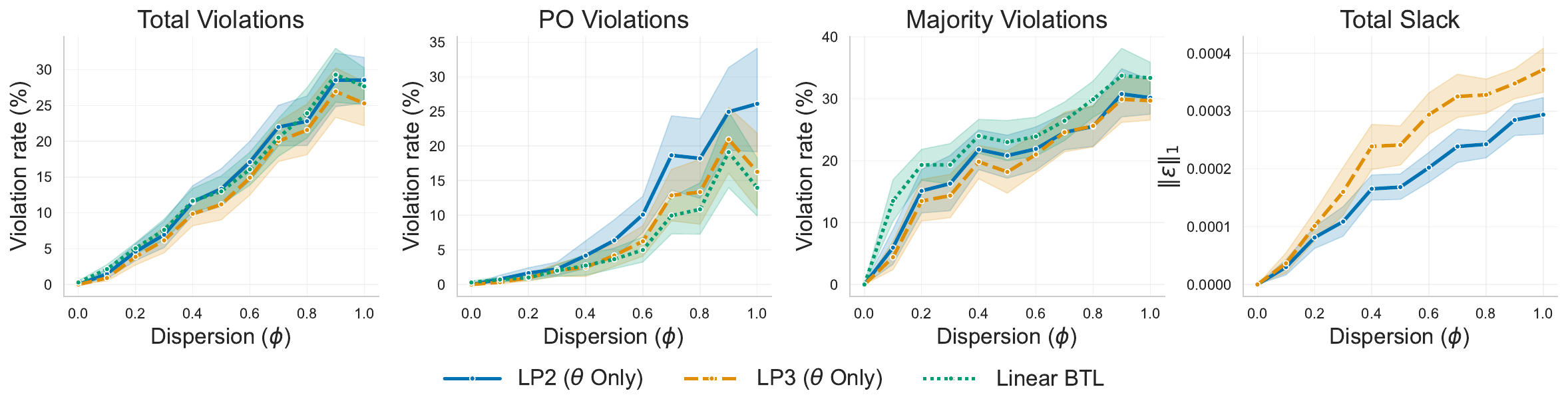}
    \caption{
        Axiom violations and candidate-level slack as a function of normalized Mallows dispersion $\phi$.
        Curves show mean violation rates across trials, with shaded regions denoting 95\% confidence intervals. Panels report (a) Total violations, (b) PO violations, (c) Majority violations,
        and (d) The $\|\vepsilon\|_1$
        for the $\mathsf{LP2}$ and $\mathsf{LP3}$. Experiments use $3$ voters, $15$ candidates, and $5$-dimensional feature embeddings.}
    \label{fig:mallows-dispersion_3_15_5}
\end{figure}

\begin{table}[t]
    \centering
    \small
    \caption{
        Performance of $\mathsf{LP2}$, $\mathsf{LP3}$, and Linear BTL on the real-world benchmark
        of 100 questions comprising 400 candidates, evaluated using
        PCA-reduced feature representations of dimensions
        $d \in \{25,75,100\}$. Each entry reports the mean
        $\pm$ standard deviation across the 100 questions. $\eta = 10^{-5}$ for $\mathsf{LP2}$ and $\mathsf{LP3}$.
    }
    \label{tab:real_data_results_100}

    \begin{tabular}{lcccccc}
        \toprule
        Model
        & Total Viol.(\%)
        & PO Viol.(\%)
        & Majority Viol.(\%)
        & $\|\vepsilon\|_1 (\times 10^{-3})$
        & Time (s) \\
        
        \midrule
        \multicolumn{6}{c}{\textbf{(a) Feature dimension $d=25$}} \\
        \midrule
        
        $\mathsf{LP2}$
        & $28.39 \pm 2.04$
        & $15.60 \pm 3.55$
        & $\textbf{32.48} \pm \textbf{2.21}$
        & $2.7 \pm 0.155$
        & $\textbf{1.5}$ \\

        $\mathsf{LP3}$
        & $\textbf{27.95} \pm \textbf{2.31}$
        & $\textbf{13.69}\pm\textbf{3.41}$
        & $32.52\pm2.43$
        & $2.87 \pm 0.155$
        & $2.95$ \\

        Linear BTL
        & $31.30\pm2.54$
        & $14.93\pm2.99$
        & $36.53\pm2.66$
        & --
        & $23.12$ \\

        \midrule
        \multicolumn{6}{c}{\textbf{(b) Feature dimension $d=50$}} \\
        \midrule

        $\mathsf{LP2}$
        & $23.97 \pm 2.20$
        & $11.47 \pm 2.23$
        & $27.98 \pm 2.48$
        & $2.41 \pm 0.136$
        & $\textbf{1.66}$ \\

        $\mathsf{LP3}$
        & $\textbf{22.76} \pm \textbf{2.31}$
        & $\textbf{9.97} \pm \textbf{2.16}$
        & $\textbf{26.84} \pm \textbf{2.80}$
        & $2.55 \pm 0.144$
        & $3.23$ \\

        Linear BTL
        & $27.96 \pm 1.74$
        & $11.13 \pm 2.48$
        & $33.36 \pm 2.02$
        & -- 
        & $45.76$ \\

        \midrule
        \multicolumn{6}{c}{\textbf{(c) Feature dimension $d=75$}} \\[3pt]
        \midrule

        $\mathsf{LP2}$
        & $19.85 \pm 1.75$
        & $9.99 \pm 2.26$
        & $22.96 \pm 2.25$
        & $2.07 \pm 0.135$
        & $\textbf{1.73}$ \\
        
        $\mathsf{LP3}$
        & $\textbf{18.64} \pm \textbf{1.78}$
        & $\textbf{7.81} \pm \textbf{2.34}$
        & $\textbf{22.07} \pm \textbf{2.07}$
        & $2.24 \pm 0.149$
        & $3.49$ \\

        Linear BTL
        & $25.18 \pm 1.99$
        & $7.92 \pm 1.97$
        & $30.68 \pm 2.32$
        & -- 
        & $66.19$ \\

        \midrule
        \multicolumn{6}{c}{\textbf{(d) Feature dimension $d=100$}} \\
        \midrule

        $\mathsf{LP2}$
        & $15.86 \pm 1.68$
        & $7.58 \pm 2.02$
        & $18.50 \pm 2.09$
        & $1.7 \pm 0.148$
        & $\textbf{1.86}$ \\

        $\mathsf{LP3}$       
        & $\textbf{14.51} \pm \textbf{1.70}$
        & $5.91 \pm 2.19$
        & $\textbf{17.26} \pm \textbf{2.11}$
        & $1.89 \pm 0.164$
        & $4.06$ \\

        Linear BTL
        & $23.37 \pm 1.53$
        & $\textbf{5.70} \pm \textbf{1.99}$
        & $29.03 \pm 1.58$
        & -- 
        & $89.28$ \\

        \bottomrule
    \end{tabular}
\end{table}

\begin{table}[t]
    \centering
    \small
    \caption{
        Performance of $\mathsf{LP2}$, $\mathsf{LP3}$, and Linear BTL on the real-world benchmark
        of 500 questions comprising 2000 candidates, evaluated using
        PCA-reduced feature representations of dimensions
        $d \in \{150,200,250,500\}$. Each entry reports the mean
        $\pm$ standard deviation across the 500 questions. $\eta = 10^{-5}$ for $\mathsf{LP2}$ and $\mathsf{LP3}$. 
    }
    \label{tab:real_data_results_500}

    \begin{tabular}{lcccccc}
        \toprule
        & Total Viol.(\%)
        & PO Viol.(\%)
        & Majority Viol.(\%)
        & $\|\vepsilon\|_1$
        & Time (s) \\

        \midrule
        \multicolumn{6}{c}{\textbf{(a) Feature dimension $d=150$}} \\
        \midrule
        $\mathsf{LP2}$
        & $24.54 \pm 1.18$
        & $10.90 \pm 1.38$
        & $29.02 \pm 1.21$
        & $1.26*10^{-2}\pm 3.57*10^{-4}$
        & $\textbf{38.73}$ \\

        $\mathsf{LP3}$
        & $\textbf{23.48} \pm \textbf{0.98}$
        & $9.26\pm0.99$
        & $\textbf{28.17}\pm\textbf{1.26}$
        & $1.31* 10^{-2} \pm 3.48*10^{-4}$
        & $109.24$ \\

        Linear BTL
        & $25.71\pm1.16$
        & $\textbf{8.34}\pm\textbf{0.91}$
        & $31.40\pm1.24$
        & --
        & $863.24$ \\

        \midrule
        \multicolumn{6}{c}{\textbf{(b) Feature dimension $d=200$}} \\
        \midrule

        $\mathsf{LP2}$
        & $23.26 \pm 1.27$
        & $9.93 \pm 1.44$
        & $27.64 \pm 1.27$
        & $1.19 * 10^{-2} \pm 3.92 * 10^{-4}$
        & $\textbf{48.1}$ \\

        $\mathsf{LP3}$
        & $\textbf{22.12} \pm \textbf{1.11}$
        & $8.44 \pm 1.31$
        & $\textbf{26.61} \pm \textbf{1.19}$
        & $1.24 * 10^{-2} \pm 4.16 * 10^{-4}$
        & $121.08$ \\

        Linear BTL
        & $24.40 \pm 1.24$
        & $\textbf{6.70} \pm \textbf{0.86}$
        & $30.20 \pm 1.29$
        & -- 
        & $1093.21$ \\

        \midrule
        \multicolumn{6}{c}{\textbf{(c) Feature dimension $d=250$}} \\
        \midrule
        
        $\mathsf{LP2}$
        & $21.69 \pm 0.73$
        & $9.51 \pm 1.82$
        & $25.69 \pm 0.92$
        & $1.12 * 10 ^{-2} \pm 3.37*10^{-4}$
        & $\textbf{60.96}$ \\

        $\mathsf{LP2}$
        & $\textbf{20.60} \pm \textbf{0.82}$
        & $8.04 \pm 1.37$
        & $\textbf{24.73} \pm \textbf{1.13}$
        & $1.19*10^{-2} \pm 3.37*10^{-4}$
        & $151.94$ \\ 

        Linear BTL
        & $23.91 \pm 0.99$
        & $\textbf{6.29} \pm \textbf{0.53}$
        & $29.69 \pm 1.18$
        & -- 
        & $1271.33$ \\

        \midrule
        \multicolumn{6}{c}{\textbf{(d) Feature dimension $d=500$}} \\
        \midrule
        
        $\mathsf{LP2}$
        & $14.57 \pm 0.73$
        & $6.74 \pm 0.75$
        & $17.24 \pm 0.79$
        & $7.87*10^{-3} \pm 2.62*10^{-4}$
        & $\textbf{275.71}$ \\

        $\mathsf{LP3}$
        & $\textbf{13.31} \pm \textbf{0.91}$
        & $5.07 \pm 0.95$
        & $\textbf{16.12} \pm \textbf{0.94}$
        & $8.83*10^{-3} \pm 2.81*10^{-4}$
        & $674.41$ \\

        Linear BTL
        & $20.50 \pm 0.86$
        & $\textbf{3.34} \pm \textbf{0.58}$
        & $26.35 \pm 0.98$
        & -- 
        & $2368.84$ \\
        
        \bottomrule
    \end{tabular}
\end{table}

\end{document}